\documentclass[letterpaper,10pt,conference]{ieeeconf}

\usepackage{cite}
\usepackage{graphicx}
\usepackage{tikz}
\usepackage{subcaption}
\usepackage[cmex10]{amsmath}
\interdisplaylinepenalty=2500
\usepackage{algpseudocode}
\usepackage{algorithmicx} 
\usepackage{array}


\usepackage{epsfig}
\usepackage{amsfonts,amssymb,multirow,bigstrut,booktabs,ctable,latexsym}

\usepackage[mathscr]{eucal}
\usepackage{verbatim}

\usepackage{amsthm}
\usepackage{algorithm}

\usepackage{stmaryrd}

\hyphenation{op-tical net-works semi-conduc-tor}
\IEEEoverridecommandlockouts  
\pdfminorversion=4

\begin{document}

\newtheorem{definition}{Definition}
\newtheorem{lemma}{Lemma}
\newtheorem{corollary}{Corollary}
\newtheorem{theorem}{Theorem}
\newtheorem{example}{Example}
\newtheorem{proposition}{Proposition}
\newtheorem{remark}{Remark}
\newtheorem{assumption}{Assumption}
\newtheorem{corrolary}{Corrolary}
\newtheorem{property}{Property}
\newtheorem{ex}{EX}
\newtheorem{problem}{Problem}
\newcommand{\argmin}{\arg\!\min}
\newcommand{\argmax}{\arg\!\max}
\newcommand{\st}{\text{s.t.}}
\newcommand \dd[1]  { \,\textrm d{#1}  }

\makeatletter
\newsavebox{\@brx}
\newcommand{\llangle}[1][]{\savebox{\@brx}{\(\m@th{#1\langle}\)}%
  \mathopen{\copy\@brx\kern-0.5\wd\@brx\usebox{\@brx}}}
\newcommand{\rrangle}[1][]{\savebox{\@brx}{\(\m@th{#1\rangle}\)}%
  \mathclose{\copy\@brx\kern-0.5\wd\@brx\usebox{\@brx}}}
\makeatother

\title{\Large\bf Abstraction-Free Control Synthesis to Satisfy Temporal Logic Constraints under Sensor Faults and Attacks}

\author{Luyao Niu$^1$, Zhouchi Li$^2$, and Andrew Clark$^3$ %
\thanks{$^1$Luyao Niu is with the Network Security Lab, Department of Electrical and Computer Engineering,
University of Washington, Seattle, WA 98195-2500,
        {\tt\small luyaoniu@uw.edu}}
\thanks{$^2$Z. Li is with the Department of Electrical and Computer Engineering, Worcester Polytechnic Institute, Worcester, MA 01609,
{\tt\small zli4@wpi.edu}}
\thanks{$^3$Andrew Clark is with the Electrical and Systems Engineering Department, McKelvey School of Engineering, Washington
University in St. Louis, St. Louis, MO 63130,
        {\tt\small andrewclark@wustl.edu}}%
\thanks{This work was supported by the National Science Foundation and the Office of Naval Research via grants CNS-1941670 and N00014-17-S-B001.}
}
\thispagestyle{empty}
\pagestyle{empty}

\maketitle

\begin{abstract}
We study the problem of synthesizing a controller to satisfy a complex task in the presence of sensor faults and attacks. We model the task using Gaussian distribution temporal logic (GDTL), and propose a solution approach that does not rely on computing any finite abstraction to model the system. We decompose the GDTL specification into a sequence of reach-avoid sub-tasks. We develop a class of fault-tolerant finite time convergence control barrier functions (CBFs) to guarantee that a dynamical system reaches a set within finite time almost surely in the presence of malicious attacks. We use the fault-tolerant finite time convergence CBFs to guarantee the satisfaction of `reach' property. We ensure `avoid' part in each sub-task using fault-tolerant zeroing CBFs. These fault-tolerant CBFs formulate a set of linear constraints on the control input for each sub-task. We prove that if the error incurred by system state estimation is bounded by a certain threshold, then our synthesized controller fulfills each reach-avoid sub-task almost surely for any possible sensor fault and attack, and thus the GDTL specification is satisfied with probability one. We demonstrate our proposed approach using a numerical study on the coordination of two wheeled mobile robots.

\end{abstract}

\section{Introduction}\label{sec:intro}

Cyber-physical systems (CPS) in applications including autonomous vehicles, industrial control systems, and robotics must complete increasingly complex tasks. Temporal logics \cite{baier2008principles} such as linear temporal logic (LTL) are used to specify tasks because of their concreteness, rigor, and rich expressiveness. To model the process and observation noises that are commonly incurred by CPS, Gaussian distribution temporal logic (GDTL) has been proposed to specify CPS properties involving the uncertainties in the system state \cite{vasile2016control,leahy2019control}. Controller design for CPS under different temporal logic constraints has attracted extensive research attention \cite{kress2009temporal,coogan2015traffic,vasile2016control,leahy2019control}. There have been two types of methodologies for temporal logic based control synthesis, namely \emph{abstraction-based} and \emph{abstraction-free} approaches. 

The abstraction-based approaches \cite{fainekos2009temporal,ding2014optimal,wongpiromsarn2009receding,vasile2016control,leahy2019control,kress2009temporal,coogan2015traffic} lift the CPS model from continuous domain to discrete domain by computing a finite abstraction, e.g., finite transition systems \cite{fainekos2009temporal} and Markov decision processes (MDPs) \cite{ding2014optimal,wongpiromsarn2009receding}. 
A controller is synthesized by applying model checking algorithms on the finite abstraction. In general, the abstraction-based approaches are computationally demanding and suffer from the curse of dimensionality. To mitigate the computation required by the abstraction-based approaches, abstraction-free approaches have been proposed \cite{wolff2012robust,horowitz2014compositional,papusha2016automata,srinivasan2018control,lindemann2018control,yang2019continuous,niu2020control}. These approaches eliminate the step of constructing the finite abstractions, and directly compute the controller in the continuous state space.

The previously introduced abstraction-based \cite{fainekos2009temporal,ding2014optimal,wongpiromsarn2009receding,vasile2016control,leahy2019control,kress2009temporal,coogan2015traffic} and abstraction-free approaches \cite{wolff2012robust,horowitz2014compositional,papusha2016automata,srinivasan2018control,lindemann2018control,yang2019continuous,niu2020control} assume that there exist no faults or malicious attacks targeting the system sensors. However, CPS have been shown to be vulnerable to sensor faults and malicious attacks \cite{liu2011false,mo2010false,cardenas2008research,fawzi2014secure}, making temporal logic based control synthesis more challenging. Sensor faults and malicious attacks can arbitrarily manipulate the sensor measurements. Inaccurate measurements prevent the CPS from correctly evaluating and tracking the satisfaction of temporal logic specifications. Moreover, the faulty or compromised measurements can bias the state estimate and lead to erroneous controller actions, which deviate the system trajectory and violate the temporal logic specification. Although there has been research attention on fault and attack detection \cite{mo2010false,shoukry2017secure,fawzi2014secure} and low-level control design under sensor attacks \cite{clark2018linear}, abstraction-free temporal logic based control synthesis under sensor faults and attacks has not been studied.

In this paper, we study the problem of synthesizing a controller for a nonlinear control-affine system to satisfy a GDTL specification in the presence of process and observation noises as well as malicious attacks on sensor measurements. We assume that the adversary has finitely many choices of the sensors to attack. The system estimates the state by employing a set of Extended Kalman Filters (EKFs) associated with each possible set of sensors attacked by the adversary. 
If there exists a control input that satisfies the GDTL specification for all of the state estimates, then the GDTL formula can be satisfied regardless of the attack. When such control input does not exist, it indicates that the state estimates given by some EKFs conflict with the others. We view two EKFs as conflicting if the divergence between their state estimates exceeds a certain threshold. We then compute a baseline state estimate without using any sensor corresponding to the conflicting EKFs. We compare this baseline with those from the conflicting EKFs, and treat the one that diverges from the baseline as being compromised.
We make the following specific contributions:
\begin{itemize}
    \item We decompose the GDTL specification into a sequence of reach-avoid sub-tasks with the system state uncertainties being explicitly encoded.
    \item We synthesize the controller for each decomposed problem by deriving a set of sufficient conditions. We guarantee the `reach' and `avoid' properties in each sub-task by developing a class of fault-tolerant finite time convergence control barrier functions and applying the fault-tolerant zeroing control barrier functions.
    \item We prove that if the errors incurred by the EKFs are bounded by some thresholds, then each decomposed reach-avoid sub-task is satisfied with probability one using our proposed approach, and thus the GDTL specification is satisfied almost surely.
    \item We evaluate our proposed approach using a numerical case study on two wheeled mobile robots. We show that our proposed approach satisfies the given GDTL specification, while a baseline fails.
\end{itemize}

The rest of this paper is organized as follows. Section \ref{sec:related} reviews related work. Section \ref{sec:prelim} presents the system model and reviews necessary background on finite time stability in probability, EKF, and GDTL. Section \ref{sec:sol} presents our proposed solution approach. A numerical case study is presented in Section \ref{sec:simulation}. Section \ref{sec:conclusion} concludes the paper.

\section{Related Work}\label{sec:related}

Abstraction-based approaches have been proposed for control synthesis of CPS in the absence of observation noise and sensor attacks under LTL constraints \cite{kress2009temporal,fainekos2009temporal,ding2014optimal,wongpiromsarn2009receding}. When CPS are subject to observation noises, GDTL has been proposed to model the uncertainties \cite{vasile2016control,leahy2019control}. 

When CPS operate under malicious attacks targeting the system actuators, abstraction-based approaches for control synthesis under LTL constraints have been studied in \cite{niu2019optimal,niu2020optimal}, where the CPS are abstracted as finite stochastic games. For CPS under partially observable environments, partially observable stochastic game is used as the finite abstraction for control synthesis \cite{ramasubramanian2020secure}. These works \cite{niu2019optimal,niu2020optimal,ramasubramanian2020secure} and the present paper focus on different scopes of attacks. In \cite{niu2019optimal,niu2020optimal,ramasubramanian2020secure}, sensor faults or attacks are not considered, and thus the system state is known by the CPS. This paper focuses on sensor faults and attacks. In addition, our proposed approach does not compute a finite abstraction for the CPS, and thus belongs to the class of abstraction-free approaches.

Abstraction-free approaches eliminate the step for computing the finite abstractions to improve the scalability. In \cite{wolff2014optimization}, the LTL specification is encoded as a set of mixed integer linear constraints. In \cite{horowitz2014compositional}, the authors solve a sequence of stochastic reachability problems via nonlinear PDEs to satisfy a co-safe LTL specification. Alternatively, a mixed continuous-discrete HJB-based formulation is proposed in \cite{papusha2016automata}. In general, solving PDEs and HJB equations are computationally expensive.

In \cite{lindemann2018control,srinivasan2018control,niu2020control}, the authors adopt control barrier functions (CBFs), which are originally proposed for safety-critical control synthesis \cite{wang2017safety,ames2016control}, to compute control policies that satisfy LTL specifications for deterministic CPS. In parallel, (control) barrier certificate based verification and synthesis for temporal logic properties have been investigated in \cite{jagtap2020formal,anand2019verification}. An abstraction-free safe learning scheme is proposed in \cite{sun2020continuous} for systems under adversarial actuator attacks. Our paper differs from these barrier function based works in two aspects. First, we consider the presence of sensor faults and attacks. Second, we develop a class of fault-tolerant finite time convergence CBF, in combined with the recent advancement in fault-tolerant zeroing CBF, to address the presence of noises, faults, and attacks.

\section{Preliminaries and Problem Formulation}\label{sec:prelim}

In this section, we first present the system model. We then give some background on GDTL and EKF. We finally present the problem formulation.

\subsection{System Model}\label{sec:system}

In this subsection, we first give the notations that we will use throughout this paper. We next present the system model.

\textbf{Notations.} We use $\mathbb{R}^n$ to denote the $n$-dimensional Euclidean space. Given a vector $a\in\mathbb{R}^n$, we use $supp(a)$ and $[a]_i$ to denote its support and $i$-th entry, respectively. For any matrix $A$, we denote its trace as $tr(A)$. The set of positive semi-definite matrices of dimension $n\times n$ is represented as $\mathbb{S}^{n\times n}$. We denote the set of non-negative real numbers as $\mathbb{R}_{\geq 0}$. 
A function $\alpha:\mathbb{R}_{\geq 0}\rightarrow\mathbb{R}_{\geq 0}$ belongs to class $\mathcal{K}$ if it is continuous, strictly increasing, and $\alpha(0)=0$. A class $\mathcal{K}$ function belongs to class $\mathcal{K}_\infty$ if $\alpha(x)\rightarrow \infty$ when $x\rightarrow \infty$. We use $\mathbb{P}(\cdot)$ to represent the probability of an event.

Consider a nonlinear dynamical system defined as
\begin{equation}\label{eq:state dynamics}
    dx_t = (f(x_t)+g(x_t)u_t)\ dt+\sigma_t\ dW_t,
\end{equation}
where $x_t\in\mathbb{R}^n$ is the system state, $u_t\in\mathbb{R}^p$ is the control input at time $t$, $\sigma_t\in\mathbb{R}^{n\times n}$, and $W_t$ is an $n$-dimensional Brownian motion. Functions $f:\mathbb{R}^n\rightarrow\mathbb{R}^n$ and $g:\mathbb{R}^n\rightarrow\mathbb{R}^{n\times p}$ are locally Lipschitz continuous.

We denote the system output at time $t$ as $y_t\in\mathbb{R}^q$. The output $y_t$ follows an observation process \cite{fujisaki1972stochastic} given as:
\begin{equation}\label{eq:output}
    dy_t = (cx_t+a_t)\ dt+\nu_t\ dV_t,
\end{equation}
where $c\in\mathbb{R}^{q\times n}$, $\nu_t\in\mathbb{R}^{q\times q}$, and $V_t$ is a $q$-dimensional Brownian motion.
We assume that the system operates in the presence of a malicious adversary. The adversary can inject a signal $a_t\in\mathbb{R}^q$ to manipulate the output $y_t$ with $supp(a_t)\subseteq\{1,\ldots,q\}$. We assume that once the adversary determines the set of sensors to attack, it cannot adjust the support of the attack signal, i.e., the adversary cannot change its target after the attack starts. We assume that there are finitely many choices for the support of $a_t$, denoted as $\mathcal{F}=\{r_1,\ldots,r_m\}$, which is also known to the system. In the remainder of this paper, we refer to $r_i$ as a fault pattern. However, the choice $supp(a_t)$ made by the adversary is not known to the system.

\subsection{Finite Time Stability in Probability of Stochastic Nonlinear Systems}


In this subsection, we introduce background on finite time stability of stochastic nonlinear systems. Consider a stochastic nonlinear system with complete state information given by Eqn. \eqref{eq:state dynamics}. We then give the definition of finite time stability in probability.
\begin{definition}[Finite Time Stability in Probability \cite{yin2011finite}]\label{def:finite time stability}
The solution of system \eqref{eq:state dynamics}, denoted as $x(t;x_0)$, is said to be finite time stable in probability if system \eqref{eq:state dynamics} admits a unique solution for any initial state $x_0\in\mathbb{R}^n$ and the following conditions hold:
\begin{enumerate}
    \item \underline{\emph{Finite time attractiveness in probability:}} For every initial state $x_0\neq \mathbf{0}$, the stochastic settling time $\tau_{x_0}=\inf\{t:x(t;x_0)=\mathbf{0}\}$ is finite almost surely, i.e., $\mathbb{P}(\tau_{x_0}<\infty)=1$.
    \item \underline{\emph{Stability in probability:}} For every pair of $\epsilon\in(0,1)$ and $\chi>0$, there exists some constant $v$ such that 
    \begin{equation}
        \mathbb{P}(|x(t;x_0)|<\chi,~\forall t\geq 0)\geq 1-\epsilon,
    \end{equation}
    for any $|x_0|<v$.
\end{enumerate}
\end{definition}

Let $V$ be a twice continuously differentiable function. Finite time stability in probability is then certified by the following stochastic Lyapunov theorem.
\begin{theorem}[Stochastic Lyapunov Theorem \cite{yin2011finite}]\label{thm:SCLF}
Assume system \eqref{eq:state dynamics} admits a unique solution. If there exists a twice continuously differentiable function $V:\mathbb{R}^n\rightarrow\mathbb{R}_{\geq 0}$, class $\mathcal{K}_\infty$ functions $\alpha_1$ and $\alpha_2$, positive constants $\gamma>0$, and $\rho\in(0,1)$ such that 
\begin{subequations}
\begin{align}
    &\alpha_1(|x|)\leq V(x)\leq \alpha_2(|x|),\\
    &\frac{\partial V}{\partial x}(x)(f(x)+g(x)u)+\frac{1}{2}tr\left(\sigma^\top\frac{\partial^2 V(x)}{\partial x^2}\sigma\right)\nonumber\\
    &\quad\quad\quad\quad\quad+ \gamma\cdot V(x)^\rho\leq 0
\end{align}
\end{subequations}
then the solution of system \eqref{eq:state dynamics} is finite time stable in probability.
\end{theorem}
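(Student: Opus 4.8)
The plan is to read condition (b) as the statement that the infinitesimal (Itô) generator of the closed-loop diffusion \eqref{eq:state dynamics},
\[
\mathcal{L}V(x) := \frac{\partial V}{\partial x}(x)\bigl(f(x)+g(x)u\bigr) + \frac{1}{2}\mathrm{tr}\!\left(\sigma^\top \frac{\partial^2 V(x)}{\partial x^2}\sigma\right),
\]
satisfies $\mathcal{L}V(x)\le -\gamma V(x)^\rho\le 0$. Since $V$ is $C^2$ and the solution is assumed unique, Itô's formula gives $dV(x_t)=\mathcal{L}V(x_t)\,dt+\frac{\partial V}{\partial x}\sigma_t\,dW_t$, so $V(x_t)$ is a nonnegative local supermartingale and hence, by Fatou, a supermartingale. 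The lower bound $\alpha_1(|x|)\le V(x)$ with $\alpha_1\in\mathcal{K}_\infty$ forces $V(x)>0$ for all $x\neq\mathbf{0}$, so the event $V(x_t)=0$ coincides with reaching the origin. I would then establish the two defining properties of Definition \ref{def:finite time stability} separately.

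\emph{Stability in probability.} Fix $\epsilon\in(0,1)$ and $\chi>0$. The maximal inequality for the nonnegative supermartingale $V(x_t)$ yields $\mathbb{P}(\sup_{t\ge 0}V(x_t)\ge\lambda)\le V(x_0)/\lambda$ for every $\lambda>0$. Taking $\lambda=\alpha_1(\chi)$ and using that $|x_t|\ge\chi$ forces $V(x_t)\ge\alpha_1(|x_t|)\ge\alpha_1(\chi)$, together with $V(x_0)\le\alpha_2(|x_0|)$, gives $\mathbb{P}(\sup_{t\ge 0}|x_t|\ge\chi)\le\alpha_2(|x_0|)/\alpha_1(\chi)$. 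Choosing $v:=\alpha_2^{-1}\bigl(\epsilon\,\alpha_1(\chi)\bigr)$, which is well defined because $\alpha_2\in\mathcal{K}_\infty$ is invertible, makes this bound at most $\epsilon$ whenever $|x_0|<v$, which is exactly the required estimate.

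\emph{Finite time attractiveness.} This is where the exponent $\rho<1$ is essential and is the crux of the argument. I would introduce the concave transformation $\phi(v):=v^{1-\rho}$ and compute, for $x\neq\mathbf{0}$,
\[
\mathcal{L}[\phi(V)](x) = (1-\rho)V^{-\rho}\mathcal{L}V(x) - \tfrac{1}{2}\rho(1-\rho)V^{-\rho-1}\,\frac{\partial V}{\partial x}\sigma\sigma^\top\Bigl(\frac{\partial V}{\partial x}\Bigr)^\top .
\]
Substituting $\mathcal{L}V\le-\gamma V^\rho$ bounds the first term by $-\gamma(1-\rho)$, while the second term is nonpositive since $\phi''<0$ and the quadratic form is nonnegative. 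Hence $\mathcal{L}[\phi(V)]\le-\gamma(1-\rho)=:-c<0$, a strictly negative constant. Writing $\tau_{x_0}$ for the hitting time of the origin, Dynkin's formula applied to $\phi(V)$ on $[0,t\wedge\tau_{x_0}]$ gives $0\le\mathbb{E}[\phi(V(x_{t\wedge\tau_{x_0}}))]\le\phi(V(x_0))-c\,\mathbb{E}[t\wedge\tau_{x_0}]$, so $\mathbb{E}[t\wedge\tau_{x_0}]\le V(x_0)^{1-\rho}/c$ for every $t$. Letting $t\to\infty$ and invoking monotone convergence yields $\mathbb{E}[\tau_{x_0}]\le V(x_0)^{1-\rho}/(\gamma(1-\rho))<\infty$, whence $\tau_{x_0}<\infty$ almost surely, i.e. $\mathbb{P}(\tau_{x_0}<\infty)=1$.

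The main obstacle is that $\phi\circ V=V^{1-\rho}$ fails to be $C^2$ at the origin, since $\phi'(v)=(1-\rho)v^{-\rho}$ blows up as $v\to 0^+$, so Itô's and Dynkin's formulas cannot be applied directly on a neighborhood of $\mathbf{0}$. I would resolve this by localization: introduce $\tau_\delta:=\inf\{t:|x_t|\le\delta\}$, on which $\phi\circ V$ is $C^2$ and bounded, run the Dynkin estimate on $[0,t\wedge\tau_\delta]$ to obtain $\mathbb{E}[t\wedge\tau_\delta]\le V(x_0)^{1-\rho}/c$ uniformly in $\delta$, and then let $\delta\downarrow 0$; by path continuity $\tau_\delta\uparrow\tau_{x_0}$, so monotone convergence transfers the bound to $\mathbb{E}[\tau_{x_0}]$. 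A secondary technicality is justifying the genuine (rather than merely local) supermartingale property and the validity of the maximal inequality up to these stopping times, both of which follow from the nonnegativity of $V$ via Fatou's lemma and optional stopping. Combining stability in probability with finite time attractiveness then gives finite time stability in probability in the sense of Definition \ref{def:finite time stability}.
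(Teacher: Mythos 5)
The paper does not prove this statement at all---it is imported verbatim from \cite{yin2011finite} as a preliminary---so there is no in-paper proof to compare against; your argument is, in substance, the standard proof of that cited result. Both halves are correct: the nonnegative-supermartingale maximal inequality with the $\mathcal{K}_\infty$ sandwich gives stability in probability, and the concave transform $\phi(v)=v^{1-\rho}$ turns $\mathcal{L}V\le-\gamma V^{\rho}$ into the constant drift bound $\mathcal{L}[\phi(V)]\le-\gamma(1-\rho)$, after which Dynkin's formula bounds $\mathbb{E}[\tau_{x_0}]$. You correctly identify the only real technical obstruction (loss of $C^2$-regularity of $V^{1-\rho}$ at the origin) and the standard localization fix; the one refinement I would add is to localize at a large radius $R$ as well as at $\delta$, so that Dynkin's formula is applied on a region where $\phi(V)$ and the coefficients are bounded, and then let $R\to\infty$ using the assumed global existence of the solution before sending $\delta\downarrow 0$.
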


\subsection{Extended Kalman Filter}\label{sec:EKF}

The system given in Eqn. \eqref{eq:state dynamics} and \eqref{eq:output} employs a set of EKFs \cite{reif2000stochastic} to estimate the system state. Each EKF corresponds to a potential fault pattern $r_i\in\mathcal{F}$. We briefly review the background on EKF in the following.

We denote the state estimate at time $t$ as $\hat{x}_{t,i}$ when fault pattern $r_i$ is chosen by the adversary. Then the dynamics of $\hat{x}_{t,i}$ can be represented as
\begin{equation}\label{eq:EKF estimate}
    \hat{x}_{t,i}=(f(\hat{x}_{t,i})+g(\hat{x}_{t,i})u_t)dt+K_{t,i}(dy_{t,i}-c_i\hat{x}_{t,i}dt),
\end{equation}
where $K_{t,i}=P_{t,i}c_i^\top R_{t,i}^{-1}$, $R_{t,i}=\nu_{t,i}\nu_{t,i}^\top$, and $c_i$ as well as $y_{t,i}$ are obtained from matrix $c$ and vector $y_t$ by removing the corresponding rows indexed by $r_i$. Matrix $\nu_{t,i}$ is obtained from $\nu_t$ by removing the rows and columns indexed by $r_i$. 
The covariance of $x_{t,i}$, denoted as $P_{t,i}$, is the solution to
\begin{equation}\label{eq:EKF Ricatti}
    \frac{dP_{t,i}}{dt}=A_{t,i}P_{t,i}+P_{t,i}A_{t,i}^\top +Q_{t}-P_{t,i}c_i^\top R_{t,i}^{-1}cP_{t,i},
\end{equation}
where $Q_t=\sigma_t\sigma_t^\top$ and $A_{t,i}=\frac{\partial \Bar{f}}{\partial x}(\hat{x}_t,u)$ with $\Bar{f}(\hat{x}_{t,i},u)=f(\hat{x}_{t,i})+g(\hat{x}_{t,i})u$. A belief state $b_{t,i}\in\mathbb{R}^{n}\times\mathbb{S}^{n\times n}\triangleq \mathbb{B}$ corresponding to fault pattern $r_i$ is defined as $b_{t,i}=(\hat{x}_{t,i},P_{t,i})$. The evolution of the belief state is described by Eqn. \eqref{eq:EKF estimate} and \eqref{eq:EKF Ricatti}. We make the following assumptions on Eqn. \eqref{eq:state dynamics} and \eqref{eq:output}.
\begin{assumption}\label{assump:SDE conditions}
SDEs in Eqn. \eqref{eq:state dynamics} and \eqref{eq:output} satisfy the following conditions for all fault patterns $r_i$:
\begin{enumerate}
    \item The pair $[\frac{\partial \Bar{f}}{\partial x}(\hat{x}_{t,i},u),c]$ is uniformly detectable;
    \item There exist constants $\beta_1$ and $\beta_2$ such that $Q_t\geq \beta_1I$ and $R_{t,i}\geq \beta_2I$ for all $t$;
    
    \item Let $$\Delta(x_t,\hat{x}_{t,i},u)=\Bar{f}(x_t,u)-\Bar{f}(\hat{x}_{t,i},u)-\frac{\Bar{f}}{\partial x}(x_t-\hat{x}_{t,i},u).$$ Then there exist real numbers $k_\Delta$ and $\epsilon_\Delta$ such that $\|\Delta(x_t,\hat{x}_{t,i},u)\|\leq k_\Delta \|x_t-\hat{x}_{t,i}\|_2^2$
    for all $x_t$ and $\hat{x}_{t,i}$ satisfying $\|x_t-\hat{x}_{t,i}\|_2\leq \epsilon_\Delta$.
\end{enumerate}
\end{assumption}
Given Assumption \ref{assump:SDE conditions}, we have the following result.
\begin{lemma}[\cite{reif2000stochastic}]\label{lemma:estimation error}
Suppose the conditions of Assumption \ref{assump:SDE conditions} hold. If there exists $\varepsilon>0$ such that if $Q_t\leq \varepsilon I$ and $R_{t,i}\leq \varepsilon I$, then for any $\zeta>0$, there exists $\gamma>0$ such that
\begin{equation}\label{eq:EKF accuracy}
    \mathbb{P}\left(\sup_{t\geq 0}\|x_t-\hat{x}_{t,i}\|_2\leq \gamma\right)\geq 1-\zeta.
\end{equation}
when the fault pattern is $r_i$.
\end{lemma}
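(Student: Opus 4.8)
The plan is to analyze the estimation error process $e_t = x_t - \hat{x}_{t,i}$ via a stochastic Lyapunov argument built on the Riccati solution $P_{t,i}$, following the continuous-time EKF stability analysis of \cite{reif2000stochastic}. First I would derive the SDE governing $e_t$ by subtracting the filter dynamics \eqref{eq:EKF estimate} from the true dynamics \eqref{eq:state dynamics} and \eqref{eq:output}. Using the definition of $\Delta$ in Assumption \ref{assump:SDE conditions}, the drift of $e_t$ splits into a linear part $(A_{t,i} - K_{t,i}c_i)e_t$ and the Taylor remainder $\Delta(x_t,\hat{x}_{t,i},u)$, while the diffusion part collects the process noise $\sigma_t\,dW_t$ and the innovation noise $-K_{t,i}\nu_{t,i}\,dV_t$.

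The natural candidate Lyapunov function is $V_t = e_t^\top P_{t,i}^{-1} e_t$. The key preliminary step is to establish uniform two-sided bounds $p_1 I \leq P_{t,i} \leq p_2 I$ for all $t$. The upper bound and the positive-definite lower bound follow from the uniform detectability in condition (1) of Assumption \ref{assump:SDE conditions} together with the lower bounds $Q_t \geq \beta_1 I$ and $R_{t,i} \geq \beta_2 I$ in condition (2); the extra hypotheses $Q_t \leq \varepsilon I$ and $R_{t,i} \leq \varepsilon I$ supply the matching upper bounds on the noise intensities. These bounds guarantee that $V_t$ is comparable to $\|e_t\|_2^2$ and that the generator computation below is well conditioned.

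Next I would apply It\^o's formula to $V_t$, retaining the $\frac{1}{2}tr(\cdot)$ correction from the diffusion and the time derivative $\frac{d}{dt}P_{t,i}^{-1}$, which I would rewrite using the Riccati equation \eqref{eq:EKF Ricatti}. After substituting the Riccati identity, the linear part of the drift contributes a negative-definite term of the form $-c_1 V_t$. The nonlinear remainder is controlled by the quadratic bound $\|\Delta\| \leq k_\Delta \|e_t\|_2^2$ from condition (3), valid whenever $\|e_t\|_2 \leq \epsilon_\Delta$, and the diffusion contributes a bounded nonnegative term proportional to $tr(Q_t) + tr(K_{t,i}R_{t,i}K_{t,i}^\top)$, which the small-noise hypothesis keeps of order $\varepsilon$. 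Combining these yields, on the region $\{\|e_t\|_2 \leq \epsilon_\Delta\}$, a supermartingale-type estimate of the form $\mathcal{L}V_t \leq -c_1 V_t + c_2\varepsilon$.

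The main obstacle is that this drift estimate holds only locally, inside the ball where the quadratic remainder bound is valid. To convert the local negative-drift inequality into the global probabilistic bound \eqref{eq:EKF accuracy}, I would introduce the exit time $\tau = \inf\{t : \|e_t\|_2 > \epsilon_\Delta\}$ and work with the stopped process $V_{t \wedge \tau}$. A stochastic confinement argument, essentially Doob's supermartingale inequality combined with the stopping-time estimates of \cite{reif2000stochastic}, then shows that for sufficiently small $\varepsilon$ and initial error the probability that $e_t$ ever leaves the ball can be made smaller than $\zeta$, while on the complementary event $\sup_{t \geq 0}\|e_t\|_2$ is bounded by a deterministic $\gamma$ determined by $c_1$, $c_2$, $\varepsilon$, and the bounds $p_1$, $p_2$. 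Tuning $\varepsilon$, and hence the noise level, small enough closes the argument and delivers the claimed $1-\zeta$ confidence bound.
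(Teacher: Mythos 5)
The paper does not prove this lemma; it is imported directly from \cite{reif2000stochastic}, so there is no in-paper proof to compare against. Your sketch is a faithful reconstruction of the argument in that reference: the Lyapunov function $V_t = e_t^\top P_{t,i}^{-1}e_t$, the two-sided bounds $p_1 I \leq P_{t,i} \leq p_2 I$ obtained from uniform detectability and the noise bounds, the local drift inequality $\mathcal{L}V_t \leq -c_1 V_t + c_2\varepsilon$ on the ball $\{\|e_t\|_2 \leq \epsilon_\Delta\}$ via the Riccati identity and condition (3) of Assumption \ref{assump:SDE conditions}, and the stopped-supermartingale confinement step are exactly the ingredients used there. The one point worth recording is that the cited theorem additionally requires the initial estimation error to be sufficiently small --- a hypothesis your confinement argument correctly invokes but which the lemma as restated in the paper silently omits.
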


\subsection{Gaussian Distribution Temporal Logic and Problem Formulation}

In this subsection, we introduce GDTL \cite{vasile2016control,leahy2019control}, which will be used to model the task that needs to be satisfied by system \eqref{eq:state dynamics}-\eqref{eq:output}. We then formulate the problem studied in this paper.

The syntax of GDTL is defined as follows:
\begin{equation}
    \varphi = True|l\leq 0|\neg\varphi|\varphi_1\land\varphi_2|\varphi_1\mathcal{U}\varphi_2,
\end{equation}
where $l:\mathbb{B}\rightarrow\mathbb{R}$, and $l\leq 0$ is a predicate.

Let $\mathbf{b}=b^0b^1\ldots$ be an infinite sequence of belief states with each $b^i\in\mathbb{B}$. We denote the suffix $b^tb^{t+1}\ldots$ of $\mathbf{b}$ as $\mathbf{b}^t$. Then a GDTL formula is interpreted as follows:
\begin{itemize}
    \item $\mathbf{b}^t\models True$
    \item  $\mathbf{b}^t\models l\leq 0$ iff $l(b^t)\leq 0$
    \item $\mathbf{b}^t\models\neg\varphi$ iff $\neg(\mathbf{b}^t\models\varphi)$
    \item $\mathbf{b}^t\models\varphi_1\land\varphi_2$ iff $(\mathbf{b}^t\models\varphi_1)\land (\mathbf{b}^t\models\varphi_2)$
    \item $\mathbf{b}^t\models\varphi_1\mathcal{U}\varphi_2$ iff $\exists t'\geq t$ s.t. $(\mathbf{b}^{t'}\models\varphi_2)\land (\mathbf{b}^{t''}\models\varphi_1)$ for all $t''<t'$
\end{itemize}

Let $\mathbb{B}_l=\{b:l(b)\leq 0\}$ be the set of belief states such that predicate $l$ is true and $\mathcal{L}_\varphi$ be the set of predicates that appear in a GDTL formula $\varphi$. We present the following proposition \cite{leahy2019control} so as to construct an equivalent deterministic Rabin automaton (DRA) representing GDTL formula $\varphi$.
\begin{proposition}[\cite{leahy2019control}]\label{def:LTL equiv}
Let $\varphi$ be a GDTL formula. Let $\Tilde{\Pi}$ be an additional finite set of atomic propositions such that $|\mathcal{L}_\varphi|=|\Tilde{\Pi}|$. Let $e:\mathcal{L}_\varphi\rightarrow\Tilde{\Pi}$ be a bijective map. Then GDTL formula $\varphi$ is equivalent to an LTL formula $\Tilde{\varphi}$ defined over atomic proposition set $\Pi\cup\Tilde{\Pi}$, with each predicate $l$ in $\mathcal{L}_\varphi$ being replaced by $e(l)\in\Tilde{\Pi}$.
\end{proposition}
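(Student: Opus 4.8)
The plan is to exhibit an explicit, structure-preserving syntactic translation from GDTL to LTL and to verify, by structural induction on the formula, that this translation preserves satisfaction under a natural Boolean labeling of belief sequences. The conceptual point is that the predicates $l\leq 0$ are the only constructs through which the continuous belief states $b^t$ enter the GDTL semantics, whereas every Boolean connective and temporal operator acts identically in GDTL and LTL. Consequently, the only information about a belief sequence that affects satisfaction is, at each time step $t$, the record of which predicates $l\in\mathcal{L}_\varphi$ satisfy $l(b^t)\leq 0$ --- and this is precisely what an LTL labeling over $\Tilde{\Pi}$ encodes.

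First I would fix the translation. Define $\Tilde{\varphi}$ recursively by sending $True$ to $True$, each predicate $l\leq 0$ to the atomic proposition $e(l)\in\Tilde{\Pi}$, and leaving the operators $\neg$, $\land$, and $\mathcal{U}$ unchanged; since the GDTL grammar uses exactly these constructs, $\Tilde{\varphi}$ is a well-formed LTL formula over $\Pi\cup\Tilde{\Pi}$. Next I would define the labeling map $\Lambda$ that assigns to a belief sequence $\mathbf{b}=b^0b^1\cdots$ the word $w=w^0w^1\cdots$ over the alphabet $2^{\Pi\cup\Tilde{\Pi}}$ determined by the rule $e(l)\in w^t$ iff $b^t\in\mathbb{B}_l$ (equivalently $l(b^t)\leq 0$) for each $l\in\mathcal{L}_\varphi$. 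Bijectivity of $e$ is what guarantees that distinct predicates are encoded by distinct propositions, so that no two semantically different predicates are conflated and the encoding is faithful.

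The core of the argument is the induction claim: for every subformula $\psi$ of $\varphi$ and every $t$, $\mathbf{b}^t\models\psi$ if and only if $w^t\models\Tilde{\psi}$, where $w=\Lambda(\mathbf{b})$. The base cases are immediate: $True$ holds on both sides, and for $\psi=(l\leq 0)$ the GDTL clause $\mathbf{b}^t\models l\leq 0 \iff l(b^t)\leq 0$ matches the LTL clause $w^t\models e(l) \iff e(l)\in w^t$ by the definition of $\Lambda$. For the inductive step, the semantic clauses for $\neg$, $\land$, and $\mathcal{U}$ in the GDTL interpretation are syntactically identical to their LTL counterparts, so each case closes by applying the induction hypothesis to the immediate subformulas --- for $\mathcal{U}$, at the witnessing index $t'$ and at all $t''<t'$. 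Setting $t=0$ then yields $\mathbf{b}\models\varphi \iff \Lambda(\mathbf{b})\models\Tilde{\varphi}$, which is the asserted equivalence.

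I expect the only real subtlety, rather than a genuine obstacle, to be stating precisely the sense in which the two formulas are ``equivalent'': the equivalence is relative to the labeling $\Lambda$, which mediates between the belief-state semantics of GDTL and the word semantics of LTL. Making this correspondence explicit --- and observing that the temporal and Boolean operators never inspect the belief states beyond the truth values of the predicates --- is exactly what makes the induction go through, and the computational content of the verification is negligible. This equivalence is what licenses constructing a DRA for $\Tilde{\varphi}$ by standard LTL-to-automaton procedures and treating it as representing $\varphi$.
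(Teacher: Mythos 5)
The paper does not prove this proposition: it is imported verbatim from \cite{leahy2019control} and used as a black box to justify building a DRA for $\varphi$. Your argument --- a homomorphic syntactic translation sending each predicate $l\leq 0$ to $e(l)$ and fixing the Boolean and temporal operators, a labeling $\Lambda$ that records at each step which predicates hold of $b^t$, and a structural induction showing satisfaction is preserved --- is the standard proof of this kind of statement and is correct; the bijectivity of $e$ is used exactly where you say, to keep distinct predicates from being conflated. Your closing remark is also the right one: the ``equivalence'' is only meaningful relative to the labeling $\Lambda$ mediating between belief-state semantics and word semantics, and making that explicit is the entire content of the proposition. The only cosmetic caveat is that the paper's until clause quantifies over all $t''<t'$ rather than all $t\leq t''<t'$; whichever convention is intended, your induction goes through verbatim provided the LTL until is read with the matching convention.
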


Leveraging Proposition \ref{def:LTL equiv}, we can represent a GDTL formula using an equivalent DRA defined as follows:
\begin{definition}[Deterministic Rabin Automaton (DRA) \cite{baier2008principles}]\label{def:DRA}
A Deterministic Rabin Automaton (DRA) is a tuple $\mathcal{A}=(\mathcal{Q},\Sigma,\delta,q_0,F)$, where $\mathcal{Q}$ is a finite set of states, $\Sigma$ is the finite set of alphabet, $\delta:\mathcal{Q}\times\Sigma\rightarrow \mathcal{Q}$ is a finite set of transitions, $q_0\in \mathcal{Q}$ is the initial state, and $F=\{(B(1),C(1)),\ldots,(B(S),C(S))\}$ is a finite set of Rabin pairs such that
$B(s),C(s)\subseteq \mathcal{Q}$ for all $s=1,\ldots,S$ with $S$ being a positive integer.
\end{definition}

We consider that the system given by Eqn. \eqref{eq:state dynamics} and \eqref{eq:output} needs to satisfy a GDTL specification $\varphi$. We define a labeling function $L:\mathbb{B}\rightarrow 2^{\Tilde{\Pi}}$, where $\Tilde{\Pi}$ is the set of atomic propositions generated by Proposition \ref{def:LTL equiv}. For any $\pi\in\Tilde{\Pi}$, we define $\llbracket\pi\rrbracket$ as $\llbracket\pi\rrbracket=\{b:\pi\in L(b)\}$.
Consider $Z\in 2^{\Tilde{\Pi}}$. We define $\llbracket Z\rrbracket$ as
\begin{equation}\label{eq:subset atomic}
    \llbracket Z\rrbracket=\begin{cases}
    \mathbb{B}\setminus\cup_{\pi\in \Tilde{\Pi}}\llbracket\pi\rrbracket&\mbox{ if }Z=\emptyset\\
    \cap_{\pi\in Z}\llbracket\pi\rrbracket\setminus\cup_{\pi\in\Pi\setminus Z}\llbracket\pi\rrbracket&\mbox{ otherwise}
    \end{cases}
\end{equation}

We let $\mathbf{b}$ be a trajectory of belief state of infinite length. We now define the trace of $\mathbf{b}$ to establish the connection between the satisfaction of $\varphi$ to trajectory $\mathbf{b}$. 
\begin{definition}[Trace of Trajectory \cite{wongpiromsarn2015automata}]\label{def:trace}
An infinite sequence $Trace(\mathbf{b})=Z_0,Z_1,\ldots,Z_i,\ldots$, where $Z_i\in 2^{\Tilde{\Pi}}$ for all $i=0,1,\ldots$ is a trace of a trajectory $\mathbf{b}$ if there exists a sequence $t_0,t_1,\ldots,t_i,\ldots$ of time instants such that
\begin{enumerate}
    \item $t_0=0$
    \item $t_i\rightarrow\infty$ as $i\rightarrow\infty$
    \item $t_i<t_{i+1}$
    \item $\hat{x}_{t_i}\in \llbracket Z_i\rrbracket$
    \item if $Z_i\neq Z_{i+1}$, then there exists some $t_i'\in[t_i,t_{i+1}]$ such that $\hat{x}_t\in\llbracket Z_i\rrbracket$ for all $t\in(t_i,t_i')$, $\hat{x}_t\in\llbracket Z_{i+1}\rrbracket$ for all $t\in(t_i',t_{i+1})$, and either $\hat{x}_{t_i'}\in\llbracket Z_i\rrbracket$ or $\hat{x}_{t_i'}\in\llbracket Z_{i+1}\rrbracket$.
\end{enumerate}
\end{definition}
Using the trace of system trajectory, we have that GDTL specification $\varphi$ is satisfied if $Trace(\mathbf{b})\models \varphi$. The problem investigated in this paper is then formulated as follows:
\begin{problem}\label{prob:formulation}
Given a parameter $\epsilon\in(0,1)$, compute a control policy $\mu:\{y_{t'}:t'\in[0,t)\}\rightarrow\mathbb{R}^p$ for the system at each time $t$ mapping from the sequence of outputs $\{y_{t'}:t'\in[0,t)\}$ to a control input $u_t$, so that the probability of satisfying $\varphi$, denoted as $\mathbb{P}(\varphi)$, satisfies $\mathbb{P}(\varphi)\geq 1-\epsilon$ for any fault $r\in\mathcal{F}$ when fault $r$ occurs.
\end{problem}

\section{Solution Approach}\label{sec:sol}

In this section, we present the proposed solution approach. We first derive a fault-tolerant stochastic finite time convergence CBF. We then decompose the GDTL formula into a sequence of sub-formulae using the DRA representing $\varphi$. We show that the satisfaction probability of GDTL formula $\varphi$ can be expressed using that of each sub-formula. We then synthesize a control policy that enables the system to satisfy each sub-formula with certain probability guarantee using fault-tolerant stochastic finite time convergence CBF and fault-tolerant stochastic zeroing CBF. We finally derive the satisfaction probability guarantee for each sub-formula.

\subsection{Fault-Tolerant Stochastic Finite Time Convergence CBF}\label{sec:FT-SFCBF}

This subsection derives fault-tolerant stochastic finite time convergence CBF. We first define stochastic finite time convergence CBF under complete information in the absence of an adversary. Then we extend it to the incomplete information setting. We finally propose the fault-tolerant stochastic finite time convergence CBF to address the presence of the adversary.

\begin{definition}[Stochastic Finite Time Convergence CBF under Complete Information]\label{def:FCBF 1}
Consider the dynamical system given in Eqn. \eqref{eq:state dynamics}
whose solution is denoted as $x(t;x_0)$. Consider a set $\mathcal{C}=\{x:h(x)\geq 0\}$ where $h:\mathbb{R}^n\rightarrow \mathbb{R}$ is a twice continuously differentiable function. We say $h$ is a stochastic finite time convergence CBF if the following conditions hold:
\begin{enumerate}
    \item There exist class $\mathcal{K}_\infty$ functions $\alpha_1$ and $\alpha_2$ such that
    \begin{equation}
        \alpha_1(h(x))\leq h(x)\leq \alpha_2(h(x))
    \end{equation}
    for all $x$.
    \item There exist control input $u$ such that 
    \begin{multline}\label{eq:CLF}
        \frac{\partial h}{\partial x}(x)(f(x)+g(x)u)+\frac{1}{2}tr\left(\sigma^\top\frac{\partial^2 h(x)}{\partial x^2}\sigma\right)\\
        + sgn(h(x)) |h(x)|\geq 0.
    \end{multline}
\end{enumerate}
\end{definition}

Stochastic finite time convergence CBF provides the \emph{finite time stability in probability} guarantee as given by Definition \ref{def:finite time stability}. We formally state this result as follows:
\begin{lemma}\label{lemma:SFCBF complete info}
Let $\mathcal{C}=\{x:h(x)\geq 0\}$. Suppose the system admits a stochastic finite time convergence CBF $h(x)$. Let $T=\inf\{t:x(t;x_0)\in\mathcal{C}\}$. Then $\mathbb{P}(T<\infty)=1$, i.e., the system reaches $\mathcal{C}$ within finite time almost surely. Moreover, the system remains in $\mathcal{C}$ for all $t'\geq T$ almost surely.
\end{lemma}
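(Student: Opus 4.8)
The plan is to reduce the vector dynamics to the evolution of the scalar process $z_t = h(x(t;x_0))$ and to recognize the defining inequality \eqref{eq:CLF} of a stochastic finite time convergence CBF as a disguised instance of the stochastic Lyapunov condition in Theorem \ref{thm:SCLF}. Applying It\^o's formula to $h$ along \eqref{eq:state dynamics} gives
\begin{equation*}
dz_t = \mathcal{L}h(x_t)\,dt + \frac{\partial h}{\partial x}(x_t)\,\sigma_t\,dW_t,
\end{equation*}
where $\mathcal{L}h(x) = \frac{\partial h}{\partial x}(x)(f(x)+g(x)u)+\frac{1}{2}tr(\sigma^\top\frac{\partial^2 h(x)}{\partial x^2}\sigma)$ is the infinitesimal generator. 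Identifying the finite time term of \eqref{eq:CLF} with $\gamma\,sgn(h(x))|h(x)|^\rho$ for $\gamma>0$ and $\rho\in(0,1)$, as needed to match Theorem \ref{thm:SCLF}, on the exterior $\mathbb{R}^n\setminus\mathcal{C}=\{h<0\}$ the drift of $z_t$ is bounded below by $\gamma|z_t|^\rho>0$, i.e. the generator pushes $z_t$ toward the target boundary $\{h=0\}$.

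For the reaching claim I would introduce the candidate $V(x)=-h(x)$, which is twice continuously differentiable and strictly positive on $\{h<0\}$, and verify that it satisfies the hypotheses of Theorem \ref{thm:SCLF}: the class $\mathcal{K}_\infty$ bounds follow from the first condition of Definition \ref{def:FCBF 1}, while negating \eqref{eq:CLF} on $\{h<0\}$ yields $\frac{\partial V}{\partial x}(f+gu)+\frac{1}{2}tr(\sigma^\top\frac{\partial^2 V}{\partial x^2}\sigma)+\gamma V^\rho\le 0$. Since Theorem \ref{thm:SCLF} certifies finite time stability in probability of the zero level set of $V$, i.e. of $\partial\mathcal{C}=\{h=0\}$, the settling time is almost surely finite. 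To make this rigorous at the boundary, where $V=\max(-h,0)$ fails to be twice differentiable, I would work with the stopped process $x_{t\wedge T}$ for the first hitting time $T=\inf\{t:x(t;x_0)\in\mathcal{C}\}$: on $[0,T)$ the trajectory remains in $\{h<0\}$ where $V$ is smooth, so the supermartingale-type estimates underlying Theorem \ref{thm:SCLF} apply verbatim, and the finite time attractiveness of Definition \ref{def:finite time stability} then delivers $\mathbb{P}(T<\infty)=1$.

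For the invariance claim, once $z_T=h(x_T)=0$ the sign term flips and \eqref{eq:CLF} reads $\mathcal{L}h(x)\ge -\gamma h(x)^\rho$ on $\{h\ge 0\}$, so the drift of $z_t$ is bounded below and cannot on its own drive $z_t$ negative. I would formalize this by a stochastic comparison argument against the scalar ODE $\dot w=-\gamma w^\rho$, $w(0)=0$, whose solution remains at $0$, aiming to show that the exit time $\tau=\inf\{t\ge T: h(x_t)<0\}$ satisfies $\mathbb{P}(\tau<\infty)=0$. The main obstacle lies precisely here: in the stochastic setting the diffusion coefficient $\frac{\partial h}{\partial x}\sigma_t$ evaluated on $\partial\mathcal{C}$ can push the trajectory across the boundary on arbitrarily short time scales, so a drift condition alone is in general insufficient for almost sure forward invariance. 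Closing this gap requires either that the noise be tangent to $\partial\mathcal{C}$, i.e. $\frac{\partial h}{\partial x}\sigma_t=0$ whenever $h=0$, or an argument showing that the finite time attractivity re-enters the set instantaneously so that the set of times spent outside $\mathcal{C}$ has almost surely zero Lebesgue measure; I would therefore establish invariance through this boundary/comparison analysis rather than through Theorem \ref{thm:SCLF}, which addresses only attractivity.
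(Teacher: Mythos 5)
Your argument for the reaching claim is essentially the paper's: the paper takes $V(x)=h(x)^2$, writes its drift as $2h(x)$ times the drift of $h$, uses \eqref{eq:CLF} on $\{x:h(x)<0\}$ to make this nonpositive, and invokes Theorem \ref{thm:SCLF}; your choice $V=-h$ on the exterior, together with the stopped-process device, accomplishes the same reduction. (Both versions share the wrinkle you implicitly flag: with the exponent $1$ appearing in \eqref{eq:CLF}, the resulting bound has the form $\gamma V^{\rho}$ with $\rho=1$, whereas Theorem \ref{thm:SCLF} requires $\rho\in(0,1)$; your insertion of a genuine $\rho\in(0,1)$ is what actually makes the finite-time conclusion go through, and is evidently the intended reading of Definition \ref{def:FCBF 1}.)

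The genuine gap is the invariance claim, which you explicitly leave open. You correctly observe that on $\{h\ge 0\}$ the inequality \eqref{eq:CLF} reduces to the drift condition ``generator of $h$ at least $-h$,'' and you are right that for a diffusion a drift inequality at the boundary does not by itself rule out excursions below the zero level set. But the paper does not attempt the comparison against $\dot w=-\gamma w^{\rho}$ that you sketch; it closes the step by recognizing this inequality as the stochastic zeroing control barrier function condition and citing \cite[Thm. 3]{clark2021control}, which is precisely an almost-sure forward-invariance result for the superlevel set $\mathcal{C}$ under that condition (the standard route to such results is a supermartingale estimate for a transformation of $h$ that diverges at the boundary, which is exactly what controls the diffusion term you are worried about). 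Without importing that result or reproducing its proof, your proposal does not establish the second assertion of Lemma \ref{lemma:SFCBF complete info}, and neither of your two suggested escape routes (noise tangent to $\partial\mathcal{C}$, or a zero-measure-excursion argument) is assumed or proved in the paper.
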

\begin{proof}


We define a function $V(h(x))$ as $V(h(x))=h(x)^2$. 
Using the definition of $V$, we have that
\begin{align*}
    &\frac{\partial V}{\partial x}(x)(f(x)+g(x)u)+\frac{1}{2}tr\left(\sigma^\top\frac{\partial^2 V(x)}{\partial x^2}\sigma\right)\\
    =&2h(x)\left[\frac{\partial h}{\partial x}(x)(f(x)+g(x)u)+\frac{1}{2}tr\left(\sigma^\top\frac{\partial^2 h(x)}{\partial x^2}\sigma\right)\right]
\end{align*}
If Eqn. \eqref{eq:CLF} holds and $x\notin \mathcal{C}$, we have that
\begin{multline*}
    \frac{\partial V}{\partial x}(x)(f(x)+g(x)u)+\frac{1}{2}tr\left(\sigma^\top\frac{\partial^2 V(x)}{\partial x^2}\sigma\right) \\
    \leq 2h(x) |V(x)|^{\frac{1}{2}}\leq 0.
\end{multline*}
Note that $h(x)<0$ when $x\notin \mathcal{C}$. By Theorem \ref{thm:SCLF}, we have that $\mathbb{P}(T<\infty)=1$, where $T=\inf\{t:h(x(t;x_0))=0\}$ is the stochastic settling time. Hence the system reaches $\mathcal{C}$ almost surely when $x_0\notin \mathcal{C}$. When $x_0\in\mathcal{C}$, the lemma holds since $T=0$.

When Eqn. \eqref{eq:CLF} holds, we have that $x_t\in\mathcal{C}$ for $t\geq T$ and
\begin{equation*}
    \frac{\partial h}{\partial x}(x)(f(x)+g(x)u)+\frac{1}{2}tr\left(\sigma^\top\frac{\partial^2 h(x)}{\partial x^2}\sigma\right)\geq -h(x).
\end{equation*}
By \cite[Thm. 3]{clark2021control}, we have that $\mathbb{P}(x_t\in\mathcal{C},~\forall t\geq T)=1$.
\end{proof}

In the following, we extend Definition \ref{def:FCBF 1} and Lemma \ref{lemma:SFCBF complete info} to the incomplete information setting where the system state is estimated via an EKF. Define 
\begin{equation*}
    \Bar{h}_{\varepsilon} = \sup\{h(x):\|x-x'\|_2\leq \varepsilon \text{ for some }x'\in h^{-1}(\{0\})\}
\end{equation*}
as the supremum of $h(x)$ for all $x$ belonging to the $\varepsilon$-neighborhood of the boundary of $\mathcal{C}$. We then present the following preliminary result \cite{clark2021control}.
\begin{lemma}[\cite{clark2021control}]\label{lemma:incomplete info safety}
If $\|x_t-\hat{x}_t\|_2\leq \varepsilon$ for all $t$ and $h(\hat{x}_t)>\Bar{h}_\varepsilon$ for all $t$, then $x_t\in\mathcal{C}$ for all $t$.
\end{lemma}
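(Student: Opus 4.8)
The plan is to argue by contradiction using the continuity of $h$ and the intermediate value theorem. Suppose the conclusion fails, so that there is some time $t$ with $x_t \notin \mathcal{C}$, i.e.\ $h(x_t) < 0$. First I would observe that $\Bar{h}_\varepsilon \geq 0$: every boundary point $x' \in h^{-1}(\{0\})$ lies within distance $\varepsilon$ of itself, so it is contained in the $\varepsilon$-neighborhood over which the supremum defining $\Bar{h}_\varepsilon$ is taken, whence $\Bar{h}_\varepsilon \geq h(x') = 0$. Combined with the hypothesis $h(\hat{x}_t) > \Bar{h}_\varepsilon$, this yields $h(\hat{x}_t) > 0 > h(x_t)$, a genuine sign change between the estimate and the true state.

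Next I would connect $x_t$ and $\hat{x}_t$ by the straight-line segment $\gamma(s) = (1-s)x_t + s\hat{x}_t$ for $s \in [0,1]$. Since $h$ is twice continuously differentiable, it is in particular continuous, and $h(\gamma(0)) = h(x_t) < 0$ while $h(\gamma(1)) = h(\hat{x}_t) > 0$. By the intermediate value theorem there exists $s^\star \in (0,1)$ with $h(\gamma(s^\star)) = 0$; writing $z = \gamma(s^\star)$, this means $z \in h^{-1}(\{0\})$, i.e.\ $z$ is a boundary point of $\mathcal{C}$.

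Then I would bound the distance from $\hat{x}_t$ to this boundary point. Because $z$ lies on the segment, $\hat{x}_t - z = (1-s^\star)(\hat{x}_t - x_t)$, so $\|\hat{x}_t - z\|_2 = (1-s^\star)\|\hat{x}_t - x_t\|_2 \leq \|\hat{x}_t - x_t\|_2 \leq \varepsilon$, using the standing error bound. Hence $\hat{x}_t$ belongs to the $\varepsilon$-neighborhood of the boundary appearing in the definition of $\Bar{h}_\varepsilon$, and therefore $h(\hat{x}_t) \leq \Bar{h}_\varepsilon$. This directly contradicts the hypothesis $h(\hat{x}_t) > \Bar{h}_\varepsilon$, completing the argument and showing $x_t \in \mathcal{C}$ for every $t$.

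The argument is essentially a one-shot application of the intermediate value theorem, so I do not expect a serious obstacle; the only points requiring care are verifying $\Bar{h}_\varepsilon \geq 0$ (so the sign change is guaranteed) and confirming that the zero produced lies within $\varepsilon$ of the \emph{estimate} rather than merely within $\varepsilon$ of the true state. I would also dispose of the degenerate case in which $h$ has no zero: then $h$ has constant sign on $\mathbb{R}^n$, and either $\mathcal{C} = \mathbb{R}^n$, making the conclusion trivial, or $\mathcal{C} = \emptyset$, which the hypothesis $h(\hat{x}_t) > \Bar{h}_\varepsilon$ rules out, so this case poses no difficulty.
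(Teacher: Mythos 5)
Your argument is correct. Note first that the paper does not actually prove this lemma: it is imported verbatim from the cited reference \cite{clark2021control} as a preliminary result, so there is no in-paper proof to compare against. Your intermediate-value-theorem argument is the natural self-contained justification: the observation that $\Bar{h}_\varepsilon \geq 0$ (each $x'\in h^{-1}(\{0\})$ lies in its own $\varepsilon$-neighborhood), the sign change $h(\hat{x}_t)>0>h(x_t)$, the zero $z$ of $h$ on the segment joining $x_t$ to $\hat{x}_t$, and the bound $\|\hat{x}_t - z\|_2 \leq \|\hat{x}_t - x_t\|_2 \leq \varepsilon$ placing $\hat{x}_t$ inside the set over which $\Bar{h}_\varepsilon$ is defined are all correct, and together they yield the contradiction $h(\hat{x}_t)\leq \Bar{h}_\varepsilon$. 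The only quibble is your dismissal of the degenerate case $h<0$ everywhere: there $h^{-1}(\{0\})=\emptyset$, so $\Bar{h}_\varepsilon = \sup\emptyset = -\infty$ under the usual convention, and the hypothesis $h(\hat{x}_t)>\Bar{h}_\varepsilon$ does \emph{not} rule this case out as you claim (the lemma is then vacuously false as literally stated). This is a pathology outside the intended scope, where $\mathcal{C}$ is a nonempty target or safe set with nonempty boundary, but if you want the write-up airtight you should either assume $h^{-1}(\{0\})\neq\emptyset$ explicitly or note that $\Bar{h}_\varepsilon$ is only meaningful in that case.
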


We define $\hat{h}(x)=h(x)-\Bar{h}_\varepsilon$. Provided Lemma \ref{lemma:incomplete info safety}, we are now ready to develop stochastic finite time convergence CBF under incomplete information setting for the system described by \eqref{eq:state dynamics} and \eqref{eq:output} employing EKF for state estimation, assuming that $a_t=\mathbf{0}$ for all $t\geq 0$.
\begin{theorem}\label{thm:SFCBF}
Let $T=\inf\{t:x(t;x_0)\in\mathcal{C}\}$. Suppose there exists a twice continuously differentiable function $h:\mathbb{R}^n\rightarrow\mathbb{R}$, control input $u$, and class $\mathcal{K}$ functions $\alpha_1$, $\alpha_2$ such that
\begin{subequations}
\begin{align}
    &\alpha_1(\hat{h}(x))\leq h(x)\leq \alpha_2(\hat{h}(x))\\
    &\frac{\partial h}{\partial x}(\hat{x})(f(\hat{x})+g(\hat{x})u)-\varepsilon \|\frac{\partial h}{\partial x}Kc\|_2\nonumber\\
    &+\frac{1}{2}tr\left(\sigma_t^\top K^\top\frac{\partial^2 h}{\partial x^2}K\sigma\right)+  sgn(\hat{h}(\hat{x}))|\hat{h}(\hat{x})|\geq 0\label{eq:SFCBF incomplete info}
\end{align}
\end{subequations}
then $\mathbb{P}(T<\infty|\|x_t-\hat{x}_t\|_2\leq \varepsilon,~\forall t)=1$, i.e., the system reaches $\mathcal{C}$ within finite time almost surely.
\end{theorem}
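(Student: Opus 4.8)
The plan is to reduce Theorem~\ref{thm:SFCBF} to the complete-information result of Lemma~\ref{lemma:SFCBF complete info}, applied to the \emph{estimate} process $\hat x_t$ rather than to the true state, and then to transfer the conclusion from $\hat x_t$ back to $x_t$ using Lemma~\ref{lemma:incomplete info safety}. Concretely, I would first substitute the (attack-free) observation $dy_t = c x_t\,dt + \nu_t\,dV_t$ into the EKF equation~\eqref{eq:EKF estimate} to obtain the closed-form stochastic dynamics of the estimate,
\begin{equation*}
d\hat x_t = \bigl(f(\hat x_t)+g(\hat x_t)u_t + K_t c\,(x_t-\hat x_t)\bigr)\,dt + K_t\nu_t\,dV_t .
\end{equation*}
This exhibits $\hat x_t$ as a genuine It\^o diffusion whose drift is the nominal vector field $f+gu$ plus the \emph{innovation cross term} $K_t c(x_t-\hat x_t)$, and whose diffusion coefficient is $K_t\nu_t$.

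Next, mirroring the complete-information proof, I would take the candidate $V(\hat x)=\hat h(\hat x)^2$ with $\hat h(\hat x)=h(\hat x)-\bar h_\varepsilon$, and apply It\^o's formula to $V(\hat x_t)$ along the estimate dynamics. Exactly as in Lemma~\ref{lemma:SFCBF complete info}, the generator factors into $2\hat h(\hat x)$ times a bracket, where the bracket now contains three contributions: (i) the nominal gradient term $\tfrac{\partial h}{\partial x}(\hat x)(f(\hat x)+g(\hat x)u)$; (ii) the innovation cross term $\tfrac{\partial h}{\partial x}(\hat x)K_t c(x_t-\hat x_t)$ coming from the filter's correction; and (iii) the second-order trace term in $K_t\nu_t$, which reproduces the trace summand of~\eqref{eq:SFCBF incomplete info}. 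The crucial step is that term (ii) depends on the \emph{unknown} true state $x_t$; on the conditioning event $\{\|x_t-\hat x_t\|_2\le\varepsilon\ \forall t\}$ I would bound it deterministically and pathwise by Cauchy--Schwarz,
\begin{equation*}
\tfrac{\partial h}{\partial x}(\hat x)K_t c(x_t-\hat x_t)\ \ge\ -\bigl\|\tfrac{\partial h}{\partial x}K_t c\bigr\|_2\,\|x_t-\hat x_t\|_2\ \ge\ -\varepsilon\bigl\|\tfrac{\partial h}{\partial x}K_t c\bigr\|_2 ,
\end{equation*}
which is precisely the $-\varepsilon\|\tfrac{\partial h}{\partial x}Kc\|_2$ summand in~\eqref{eq:SFCBF incomplete info}. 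Hence hypothesis~\eqref{eq:SFCBF incomplete info} forces the bracket to be nonnegative, and multiplying by $2\hat h(\hat x)$, which is negative whenever $\hat h(\hat x)<0$, i.e.\ when the estimate is outside the target set, flips the inequality so that the generator of $V$ is $\le 0$ along the estimate dynamics on this event, exactly as in Lemma~\ref{lemma:SFCBF complete info}.

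With this drift inequality in hand I would invoke the stochastic Lyapunov theorem (Theorem~\ref{thm:SCLF}) in the same manner as the complete-information case, concluding that the stochastic settling time at which $\hat h(\hat x_t)$ first reaches $0$, i.e.\ at which $\hat x_t$ enters $\{h(\hat x)\ge\bar h_\varepsilon\}$, is finite almost surely under $\mathbb{P}(\cdot\mid\|x_t-\hat x_t\|_2\le\varepsilon\ \forall t)$. I would then transfer this to the true state: on the conditioning event, once $h(\hat x_t)\ge\bar h_\varepsilon$, Lemma~\ref{lemma:incomplete info safety} guarantees $x_t\in\mathcal C$, so the hitting time $T$ of $\mathcal C$ by $x_t$ is no later than the settling time of the estimate, yielding $\mathbb{P}(T<\infty\mid\|x_t-\hat x_t\|_2\le\varepsilon,\ \forall t)=1$.

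The main obstacle is term (ii): unlike in the complete-information setting, the drift of the process we analyze depends on the true state $x_t$, which is never observed. The whole argument hinges on replacing this unknown quantity by its worst case over the ball $\|x_t-\hat x_t\|_2\le\varepsilon$ and absorbing it into the robust margin $-\varepsilon\|\tfrac{\partial h}{\partial x}Kc\|_2$; conditioning on the uniform estimation-error bound (guaranteed in probability by Lemma~\ref{lemma:estimation error}) is what makes this pathwise bound legitimate so that Theorem~\ref{thm:SCLF} still applies. A secondary technical point is the strictness gap between $\{h(\hat x)\ge\bar h_\varepsilon\}$ reached by the finite-time argument and the strict inequality $h(\hat x)>\bar h_\varepsilon$ required by Lemma~\ref{lemma:incomplete info safety}, which I would close using the remaining-in-set (forward-invariance) conclusion of Lemma~\ref{lemma:SFCBF complete info} applied to the estimate.
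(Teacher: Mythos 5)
Your proposal follows essentially the same route as the paper's proof: derive the EKF estimate dynamics, apply It\^o's formula to $\hat h$ along the estimate process, bound the innovation cross term $\frac{\partial h}{\partial x}K_t c(x_t-\hat x_t)$ by $-\varepsilon\|\frac{\partial h}{\partial x}K_t c\|_2$ on the conditioning event via Cauchy--Schwarz, and reduce to Lemma~\ref{lemma:SFCBF complete info}. Your final paragraph on transferring the estimate's settling time to the true state's hitting time via Lemma~\ref{lemma:incomplete info safety} (and closing the strictness gap with forward invariance) is a step the paper leaves implicit, so your write-up is, if anything, slightly more careful.
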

\begin{proof}
Since we use EKF for state estimation, we have that the state estimate and covariance follow Eqn. \eqref{eq:EKF estimate} and \eqref{eq:EKF Ricatti}. Given the output dynamics of $y_t$, we have that
\begin{align*}
    d\hat{x}_t&=(f(\hat{x}_t) + g(\hat{x}_t)u_t)d_t + K_t(cx_tdt+\nu_tdV_t-c\hat{x}_tdt)\\
    &=(f(\hat{x}_t) + g(\hat{x}_t)u_t + K_tc(x_t-\hat{x}_t))dt + K_t\nu_tdV_t.
\end{align*}
Let $\hat{h}(x)=h(x)-\Bar{h}_\varepsilon$. We have that
\begin{multline}
    d\hat{h}_t = \bigg(\frac{\partial h}{\partial x}(\hat{x}_t)\left(f(\hat{x}_t) + g(\hat{x}_t)u_t + K_tc(x_t-\hat{x}_t)\right)\\
    +\frac{1}{2}tr\left(\nu_t^\top K_t^\top \frac{\partial^2 h}{\partial x}(\hat{x}_t)K_t\nu_t\right)\bigg)dt - \frac{\partial h}{\partial x}K_t\nu_tdV_t.
\end{multline}
When $\|x_t-\hat{x}_t\|_2\leq \varepsilon$ holds, then 
\begin{equation*}
    \frac{\partial h}{\partial x}K_tc(x_t-\hat{x}_t)\geq  -\left\|\frac{\partial h}{\partial x}K_tc\right\|_2\|x_t-\hat{x}_t\|_2\geq -\varepsilon\left\|\frac{\partial h}{\partial x}K_tc\right\|_2.
\end{equation*}
When Eqn. \eqref{eq:SFCBF incomplete info} holds, we thus have that
\begin{align*}
    &\frac{\partial h}{\partial x}(\hat{x}_t)\left(f(\hat{x}_t) + g(\hat{x}_t)u_t + K_tc(x_t-\hat{x}_t)\right)\\
    &\quad\quad\quad+\frac{1}{2}tr\left(\nu_t^\top K_t^\top \frac{\partial^2 h}{\partial x^2}(\hat{x}_t)K_t\nu_t\right)\\
    \geq&\frac{\partial h}{\partial x}(\hat{x}_t)\left(f(\hat{x}_t) + g(\hat{x}_t)u_t\right) - \varepsilon\left\|\frac{\partial h}{\partial x}K_tc\right\|_2\\
    &\quad\quad\quad+\frac{1}{2}tr\left(\nu_t^\top K_t^\top \frac{\partial^2 h}{\partial x^2}(\hat{x}_t)K_t\nu_t\right)\\
    \geq & - sgn(\hat{h}(\hat{x}_t))|\hat{h}(\hat{x}_t)|.
\end{align*}
Finally, according to Lemma \ref{lemma:SFCBF complete info}, we have that time $T=\inf\{t:x(t;x_0)\in\mathcal{C}\}$ is finite with probability $\mathbb{P}(T<\infty|\|x_t-\hat{x}_t\|_2\leq\varepsilon,~\forall t)=1$.
\end{proof}

We are now ready to present fault-tolerant stochastic finite time convergence CBF for the system given by \eqref{eq:state dynamics}-\eqref{eq:output} as follows. The system utilizes $m$ EKFs, with each associated with one attack pattern $r_i\in\mathcal{F}$, where $i=1,\ldots,m$.
\begin{lemma}\label{thm:FT-SFCBF}
Let $\Bar{h}_{\varepsilon_i} = \sup\{h(x):\|x-x'\|_2\leq \varepsilon_i \text{ for some }x'\in h^{-1}(\{0\})\}$ and $\hat{h}_i(x)=h(x)-\Bar{h}_{\varepsilon_i}$. Let $T=\inf\{t:x(t;x_0)\in\mathcal{C}\}$. Suppose $\varepsilon_1,\ldots,\varepsilon_m$ and $\theta_{ij}$ are chosen such that there exists $\varrho>0$, making any $X_t'\subseteq X_t(\varrho)$ satisfying $\|\hat{x}_{t,i}-\hat{x}_{t,j}\|\leq \theta_{ij}$ for all $i,j\in X_t'$ there exists $u$ satisfying
    \begin{equation}
        \Lambda_i(\hat{x}_{t,i})u>0,~\forall i\in X_t'
    \end{equation}
    where $\Lambda_i(\hat{x}_{t,i})=\frac{\partial h_i}{\partial x}(\hat{x}_{t,i})g(\hat{x}_{t,i})$, $X_t(\varrho)=\{i:\hat{h}_i(\hat{x}_{t,i})<\varrho\}$,
then 
\begin{multline*}
    \mathbb{P}(T<\infty|\|\hat{x}_{t,i}-\hat{x}_{t,i,j}\|_2\leq\theta_{ij}/2,~\forall j,\\
    \|\hat{x}_{t,i}-x_t\|_2\leq \varepsilon_i,~\forall t,\forall r_i\in\mathcal{F})=1.
\end{multline*}
\end{lemma}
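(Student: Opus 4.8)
The plan is to reduce the fault-tolerant claim to the single-filter result of Theorem~\ref{thm:SFCBF}. Let $r_{i^*}\in\mathcal{F}$ be the (unknown) fault pattern actually chosen by the adversary, and recall that under the conditioning event the corresponding EKF satisfies $\|\hat{x}_{t,i^*}-x_t\|_2\le\varepsilon_{i^*}$ for all $t$. If I can exhibit, at (almost) every time $t$, a single control input $u_t$ that makes the finite-time inequality~\eqref{eq:SFCBF incomplete info} hold with $h$ replaced by $\hat h_{i^*}$ and $\hat x$ by $\hat x_{t,i^*}$, then Theorem~\ref{thm:SFCBF} applied to filter $i^*$ yields $\mathbb{P}(T<\infty\mid \|\hat{x}_{t,i^*}-x_t\|_2\le\varepsilon_{i^*},\,\forall t)=1$, which is exactly the claimed conclusion. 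Since $i^*$ is unknown, the control must in fact satisfy~\eqref{eq:SFCBF incomplete info} simultaneously for every filter index that could be the true one, i.e.\ for every $i$ in the relevant candidate set.

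The first key step is to show that the sign condition $\Lambda_i(\hat{x}_{t,i})u>0$ in the hypothesis is enough to force~\eqref{eq:SFCBF incomplete info}. For fixed $\hat x_{t,i}$ the left-hand side of~\eqref{eq:SFCBF incomplete info} is an affine function of $u$ whose only unbounded term is $\frac{\partial h_i}{\partial x}(\hat x_{t,i})g(\hat x_{t,i})u=\Lambda_i(\hat x_{t,i})u$; the drift term $\frac{\partial h_i}{\partial x}f$, the error term $\varepsilon_i\|\frac{\partial h_i}{\partial x}K_ic\|_2$, the trace term, and the finite-time term $\mathrm{sgn}(\hat h_i)|\hat h_i|$ are all bounded on the region of interest. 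Hence, given any $u$ with $\Lambda_i(\hat x_{t,i})u>0$ for every candidate $i$, replacing $u$ by $\lambda u$ and choosing $\lambda>0$ large enough (the maximum of the finitely many per-filter thresholds) makes $\Lambda_i(\hat x_{t,i})(\lambda u)$ dominate the remaining terms for all candidates at once, so~\eqref{eq:SFCBF incomplete info} holds for each of them. The hypothesis on $\varepsilon_1,\dots,\varepsilon_m,\theta_{ij},\varrho$ guarantees the existence of such a $u$ for every consistent candidate set $X_t'$.

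The second key step is to certify that the true index $i^*$ always belongs to the candidate set to which the control is applied. I would use the conditioning event $\|\hat{x}_{t,i}-\hat{x}_{t,i,j}\|_2\le\theta_{ij}/2$ together with a triangle inequality through the baseline estimate $\hat x_{t,i,j}$ to establish that the filters retained by the conflict-resolution rule form a pairwise-consistent set $X_t'\subseteq X_t(\varrho)$, i.e.\ $\|\hat x_{t,i}-\hat x_{t,j}\|\le\theta_{ij}$ for all $i,j\in X_t'$, and that $i^*$ is never discarded: because $\|\hat x_{t,i^*}-x_t\|_2\le\varepsilon_{i^*}$, the estimate $\hat x_{t,i^*}$ stays close to the true state and hence to the baseline, so it passes the $\theta_{ij}/2$ consistency test. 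As long as $i^*\in X_t(\varrho)$, i.e.\ $\hat h_{i^*}(\hat x_{t,i^*})<\varrho$, the control from the previous step satisfies~\eqref{eq:SFCBF incomplete info} for filter $i^*$, so the drift of $\hat h_{i^*}(\hat x_{t,i^*})$ obeys the finite-time-convergence bound underlying Lemma~\ref{lemma:SFCBF complete info}; this drives $\hat h_{i^*}$ up to the level $\varrho>0$ in finite time almost surely. At that instant $h(\hat x_{t,i^*})>\bar h_{\varepsilon_{i^*}}$, so Lemma~\ref{lemma:incomplete info safety} gives $x_t\in\mathcal{C}$ and therefore $T<\infty$.

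The main obstacle I anticipate is the bookkeeping created by the candidate set $X_t(\varrho)$ evolving in time: as individual filters cross the threshold $\varrho$ they leave the set, the chosen $u$ switches, and one must argue that these switches do not interrupt the monotone finite-time ascent of $\hat h_{i^*}$ for the single index that matters. I would handle this with a stopping-time argument, stopping the process the first time $\hat h_{i^*}$ reaches $\varrho$ and applying the finite-time convergence estimate on the stopped process, so that only the behaviour of filter $i^*$ (which remains in $X_t(\varrho)$ until that stopping time) need be tracked. A secondary technical point is verifying that the non-$\Lambda_i u$ terms are genuinely uniformly bounded on the sublevel region $\{\hat h_i<\varrho\}\cap\{\|\hat x_{t,i}-x_t\|\le\varepsilon_i\}$, which is where local Lipschitzness of $f,g$ and the EKF gain bounds from Assumption~\ref{assump:SDE conditions} and Lemma~\ref{lemma:estimation error} enter.
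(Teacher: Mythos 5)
The paper omits the proof of this lemma ``due to space constraint,'' so the only basis for comparison is the proof of the closely analogous Theorem~\ref{thm:satisfaction of sub-formula}; your argument follows essentially the same route as that proof --- reduce to the single-filter result of Theorem~\ref{thm:SFCBF} for the true fault pattern, and use the triangle inequality through the baseline estimate $\hat{x}_{t,i,j}$ to show that any filter conflicting with the true one satisfies $\|\hat{x}_{t,i,j}-\hat{x}_{t,j}\|_2\geq\theta_{ij}/2$ and is discarded while the true filter is retained. Your proposal is correct and in fact supplies two details the paper glosses over, namely the explicit scaling argument showing that $\Lambda_i(\hat{x}_{t,i})u>0$ can be promoted to the full inequality \eqref{eq:SFCBF incomplete info}, and the stopping-time bookkeeping for the time-varying candidate set $X_t(\varrho)$.
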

The proof of Lemma \ref{thm:FT-SFCBF} is omitted due to space constraint. Functions $h_1,\ldots,h_m$ satisfying conditions 1) and 2) in Lemma \ref{thm:FT-SFCBF} are fault tolerant stochastic finite time convergence CBFs.

\subsection{Decomposition of GDTL Formula $\varphi$ and Satisfaction of the Decomposed Sub-formulae}

Given the GDTL formula, we construct the DRA representing it, as defined in Definition \ref{def:DRA}. We then pick an accepting run of the automaton $\mathcal{A}$, denoted as $\eta$. We rewrite $\eta$ into the prefix suffix form $\eta=q_1,\ldots,q_n,(\eta_{suff})^\omega$. We next decompose the accepting run $\eta$ into a sequence of sub-formulae, denoted as $\psi_1,\ldots,\psi_N$. Here 
\begin{equation}\label{eq:sub-formula}
    \psi_j = \Phi_j\mathcal{U}\Box (\phi_j\land \Phi_{j+1}),
\end{equation}
where $\Phi_j$ is the input word for self-loop transition $\delta(q_j,\Phi_j)=q_j$, and $\phi_j$ is the input word corresponding to the transition from state $q_j$ to $q_{j+1}$. That is, sub-formula $\psi_j$ models a one-step transition from state $q_j$ to $q_{j+1}$ along accepting run $\eta$.

We next synthesize the control policy to satisfy each sub-formula leveraging fault-tolerant finite time convergence control barrier functions in Section \ref{sec:FT-SFCBF} and fault-tolerant zeroing control barrier functions proposed in \cite{clark2020control}.
We then derive the satisfaction probability of each sub-formula, which is used to bound the probability of satisfying $\varphi$.

We let functions $h^j$ and $d^j$ be given as
\begin{align*}
    &\{b:h^j(b)\geq 0\}=\{b:(L(b)\models \phi_j\land\Phi_{j+1})\lor (L(b)\models \Phi_j )\}\\
    &\{b:d^j(b)\geq 0\}=\{b:L(b)\models \Phi_{j+1}\}.
\end{align*}
Then sub-formula $\psi_j$ indicates that the system trajectory needs to guarantee $h^j(b)\geq 0$ for all $[t_j,t_{j+1}]$ and $d^j(b)\geq 0$ for some $t'\in[t_j,t_{j+1}]$, where $t_j$ and $t_{j+1}$ are some time instants satisfying $t_{j+1}\geq t_j\geq 0$. We now define 
\begin{align}
    &\Bar{h}^j_{\varepsilon} = \sup\{h^j(b):\|b-b'\|_2\leq \varepsilon\nonumber\\
    &\quad\quad\quad\text{ for some }b'\in (h^j)^{-1}(\{0\})\}\label{eq:h bar}\\
    &\Bar{d}^j_{\varepsilon} = \sup\{d^j(b):\|b-b'\|_2\leq \varepsilon \nonumber\\
    &\quad\quad\quad\text{ for some }b'\in (d^j)^{-1}(\{0\})\}\label{eq:d bar}\\
    &\hat{h}^j(b)=h^j(b)-\Bar{h}^j_{\varepsilon},\quad\quad\hat{d}^j(b)=d^j(b)-\Bar{d}^j_{\varepsilon}.\label{eq:h d hat}
\end{align}

We next present an algorithm for synthesizing the control policy so that the probability of satisfying sub-formula $\psi_j$ is lower bounded in Algorithm \ref{algo:synthesis}. In line 3, we compute $X_t(\varrho)=\{i:\hat{h}_i^j(b_{t,i})<\varrho_1,\hat{d}_i^j(b_{t,i})<\varrho_2\}$, where $\varrho=[\varrho_1,\varrho_2]^\top$ and $\varrho_1,\varrho_2>0$ are constants. In line 4, we first compute the set of control inputs $\Omega_i$ and $\Gamma_i$, where $\Omega_i$ ensures that $h^j(b_{t,i})\geq 0$ for all $t\in[t_j,t_{j+1}]$, and $\Gamma_i$ ensures that there exists $t'\in [t_j,t_{j+1}]$ such that $d^j(x)\geq 0$, provided that the fault pattern is $r_i$. If $\cap_{i\in X_t(\varrho)}(\Omega_i\cap \Gamma_i)\neq \emptyset$, then any control input $u\in \cap_{i\in X_t(\varrho)}(\Omega_i\cap \Gamma_i)$ guarantees the satisfaction of sub-formula $\psi_j$ regardless of the fault pattern. If such $u$ does not exist, then line 5-11 aims to identify the fault pattern by comparing the state estimate when fault pattern $r_i$ is removed.
Line 17 computes the control input by solving a quadratic program. If no feasible control input can be found, we will repeat the procedure for another accepting run $\eta'\neq \eta$.

\begin{theorem}\label{thm:satisfaction of sub-formula}
Let $b_{t,i}= (\hat{x}_{t,i},P_{t,i})$. We define 
\begin{align*}
    &\Lambda_i^j(b_{t,i},u)=\frac{\partial h_i^j}{\partial x}(b_{t,i})g(\hat{x}_{t,i})u\nonumber\\
    &~+ tr\left[\frac{\partial h_i^j}{\partial P}(b_{t,i})\left(\frac{\partial g}{\partial x}(\hat{x}_{t,i})uP_{t,i}+P_{t,i}(\frac{\partial g}{\partial x}(\hat{x}_{t,i})u)^\top\right)\right],\\ &\Xi_i^j(b_{t,i},u)=\frac{\partial d_i^j}{\partial x}(b_{t,i})g(\hat{x}_{t,i})u\nonumber\\
    &~+ tr\left[\frac{\partial d_i^j}{\partial P}(b_{t,i})\left(\frac{\partial g}{\partial x}(\hat{x}_{t,i})uP_{t,i}+P_{t,i}(\frac{\partial g}{\partial x}(\hat{x}_{t,i})u)^\top\right)\right].
\end{align*}
Suppose $\varepsilon_1,\ldots,\varepsilon_m$ and $\theta_{ij}$ are chosen such that there exists $\varrho$ making $\|b_{t,i}-b_{t,k}\|\leq \theta_{ik}$ holds for all $i,k\in X_t'$ and for any $X_t'\subseteq X_t(\varrho)$. If there exists some $u$ satisfying
    \begin{equation}\label{eq:control set}
        \Lambda_i(b_{t,i},u)> 0,\quad \Xi_i(b_{t,i},u)> 0,~\forall i\in X_t'.
    \end{equation}
then 
\begin{multline*}
    \mathbb{P}(\psi_j|
    \|b_{t,i}-b_{t,i,k}\|_2\leq\theta_{ik}/2,~\forall j,\\
    \|b_{t,i}-b_t\|_2\leq \varepsilon_i,~\forall t,\forall r_i\in\mathcal{F})=1.
\end{multline*}
\end{theorem}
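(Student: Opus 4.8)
The plan is to decompose the satisfaction of $\psi_j$ into a ``reach'' obligation carried by $d^j$ and a ``stay/avoid'' obligation carried by $h^j$, and then discharge these two obligations using the fault-tolerant finite time convergence CBF of Lemma \ref{thm:FT-SFCBF} and the fault-tolerant zeroing CBF of \cite{clark2020control}, respectively, working throughout in the belief space $\mathbb{B}=\mathbb{R}^n\times\mathbb{S}^{n\times n}$. First I would unfold the semantics: by \eqref{eq:sub-formula} and the definitions of $h^j,d^j$, a run witnessing $\psi_j$ over $[t_j,t_{j+1}]$ is exactly the event that $d^j(b_{t'})\geq 0$ for some $t'\in[t_j,t_{j+1}]$ (reaching $\Phi_{j+1}$) while $h^j(b_t)\geq 0$ for all $t\in[t_j,t_{j+1}]$ (remaining in $\Phi_j$ or the target until then). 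It therefore suffices to show that, conditioned on the stated error and divergence bounds, each of these two events holds almost surely, since the intersection of two almost-sure events is almost sure.

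For the reach part I would compute the It\^o differential of $d^j(b_{t,i})$ along the belief dynamics \eqref{eq:EKF estimate}--\eqref{eq:EKF Ricatti}. Since $P_{t,i}$ evolves by the deterministic Riccati ODE and $A_{t,i}=\partial\bar{f}/\partial x$ carries the control through $\frac{\partial g}{\partial x}u$, the control-dependent part of the drift is exactly $\Xi_i^j(b_{t,i},u)$, with the $\partial d_i^j/\partial P$ trace term arising from $A_{t,i}P_{t,i}+P_{t,i}A_{t,i}^\top$. The hypothesis $\Xi_i^j>0$ for every $i\in X_t'$ provides positive control authority at each EKF whose belief is within $\varrho_2$ of the boundary of $\{d^j\geq 0\}$; scaling $u$ then dominates the bounded state-independent drift, the diffusion, and the finite-time term $sgn(\hat{d}_i^j)|\hat{d}_i^j|$, reproducing the finite time convergence inequality \eqref{eq:SFCBF incomplete info} in belief coordinates. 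Invoking Lemma \ref{thm:FT-SFCBF} with $d^j$ in place of $h$, and using $\|b_{t,i}-b_t\|_2\leq\varepsilon_i$ together with the pairwise bounds $\|b_{t,i}-b_{t,k}\|\leq\theta_{ik}$ supplied by the choice of $\varrho$, yields that $T=\inf\{t:b_t\in\{d^j\geq 0\}\}$ is finite almost surely.

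For the stay part I would repeat the It\^o computation for $h^j(b_{t,i})$, identifying $\Lambda_i^j(b_{t,i},u)$ as the corresponding control-dependent drift. The hypothesis $\Lambda_i^j>0$ on $X_t'$ is precisely the fault-tolerant zeroing CBF condition of \cite{clark2020control}: whenever the true (unknown) belief approaches $\partial\{h^j\geq 0\}$, the set $X_t(\varrho)$ captures every EKF consistent with it, the divergence bound $\theta_{ik}$ forces these EKFs to agree, and a common $u$ pushes every such belief back into the set. Applying the forward-invariance guarantee of \cite{clark2020control} under $\|b_{t,i}-b_t\|_2\leq\varepsilon_i$ and $\|b_{t,i}-b_{t,i,k}\|\leq\theta_{ik}/2$ gives $h^j(b_t)\geq 0$ for all $t$ almost surely. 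Intersecting this with the reach event of the previous step establishes $\mathbb{P}(\psi_j\mid\cdots)=1$.

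The main obstacle I anticipate is the simultaneous, fault-agnostic enforcement of both barrier conditions in belief space under incomplete information: one must show that a single control satisfying $\Lambda_i^j>0$ and $\Xi_i^j>0$ for all $i\in X_t'$ actually certifies the barriers for the \emph{true} belief $b_t$, whose fault pattern is unknown. This hinges on the consistency guaranteed by the $\varrho$ and $\theta_{ik}$ selection---that every EKF whose belief lies near a boundary is within $\theta_{ik}$ of the others, so the true belief sits in their common neighborhood---and on verifying that the estimation-error contributions (the $\varepsilon_i$-weighted correction terms of the form $\partial h/\partial x\,Kc$ appearing in \eqref{eq:SFCBF incomplete info}) remain dominated after translating the state-space bound of Lemma \ref{lemma:estimation error} into the belief-space bound $\|b_{t,i}-b_t\|_2\leq\varepsilon_i$.
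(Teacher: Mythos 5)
Your overall architecture---splitting $\psi_j$ into a reach obligation on $d^j$ handled by the finite time convergence CBF and an invariance obligation on $h^j$ handled by the zeroing CBF of \cite{clark2020control}, then intersecting two almost-sure events---matches the final line of the paper's proof. But you have named the central difficulty and then assumed it away rather than resolved it. You write that the choice of $\varrho$ and $\theta_{ik}$ guarantees that ``every EKF whose belief lies near a boundary is within $\theta_{ik}$ of the others.'' It does not: the theorem's hypothesis is conditional (\emph{if} the beliefs in $X_t'$ pairwise agree within $\theta_{ik}$, \emph{then} a common $u$ with $\Lambda_i^j>0$ and $\Xi_i^j>0$ exists), and under attack the EKF driven by compromised sensors can drift arbitrarily far from the others, making $\cap_{i\in X_t(\varrho)}(\Omega_i\cap\Gamma_i)$ empty. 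The paper's proof is almost entirely devoted to this situation, via a three-case analysis tied to Algorithm \ref{algo:synthesis}: (i) if all beliefs in $X_t(\varrho)$ pairwise agree, feasibility follows directly from the hypothesis; (ii) if the EKF $i$ matching the true fault pattern agrees with all others but two other EKFs $k,z$ conflict with each other, the algorithm prunes one of them and the situation reduces to (i); (iii) if some $k$ disagrees with the true pattern's EKF, i.e.\ $\|b_{t,i}-b_{t,k}\|_2>\theta_{ik}$, then the triangle inequality combined with the conditioning event $\|b_{t,i}-b_{t,i,k}\|_2\leq\theta_{ik}/2$ forces $\|b_{t,i,k}-b_{t,k}\|_2\geq\theta_{ik}/2$, so the algorithm's comparison against the baseline estimate removes $k$ and never removes $i$.

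That last step is the actual fault-tolerance argument: it is what guarantees both that the quadratic program in Algorithm \ref{algo:synthesis} eventually becomes feasible and that the surviving constraint set still contains $\Omega_i\cap\Gamma_i$ for the \emph{true} fault pattern, so the CBF certificates you invoke are applied to the true belief and not to a pruned, compromised one. Without it, your argument establishes the conclusion only in the benign case where no EKF is ever led astray, which is precisely the case the theorem is not about. Your It\^o-calculus identification of $\Lambda_i^j$ and $\Xi_i^j$ as the control-dependent drifts through the Riccati equation is consistent with how the paper represents $\Omega_i$ and $\Gamma_i$ as half-spaces $\{u:\Lambda_i\geq\omega_i\}$ and $\{u:\Xi_i\geq\iota_i\}$, but it is supporting material, not the missing core.
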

\begin{algorithm}[!hbp]
	\caption{Control synthesis for sub-formula $\psi_j$.}
	\label{algo:synthesis}
	\begin{algorithmic}[1]
		\State \textbf{Input}: Sub-formula $\psi_j$, system dynamics \eqref{eq:state dynamics}-\eqref{eq:output}, parameters $\varepsilon_1,\ldots,\varepsilon_m$, $\theta_{ij}$ where $i<j$, and $\varrho_1,\varrho_2>0$.
		\State \textbf{Output:} control input $u_t$
		\State Compute $X_t(\varrho)=\{i:\hat{h}_i^j(b_{t,i})<\varrho_1,\hat{d}_i^j(b_{t,i})<\varrho_2\}$, where $\varrho=[\varrho_1,\varrho_2]^\top$
		\State For each $i\in X_t(\varrho)$, compute
		\begin{align}
		    &\Omega_i=\Big\{u:\frac{\partial h_i^j}{\partial x}(f(\hat{x}_{t,i})+g(\hat{x}_{t,i})u) - \varepsilon\|\frac{\partial h_i^j}{\partial x}K_{t,i}c\|_2\nonumber
            \\
            &+ \frac{1}{2}tr\left(\nu_{t,i}^\top K_{t,i}^\top\frac{\partial^2 h_i^j}{\partial x^2}K_{t,i}\nu_{t,i}\right)\geq -\hat{h}(b_{t,i})\Big\}\\
            &\Gamma_i=\Big\{u:\frac{\partial d_i^j}{\partial x}(f(\hat{x}_{t,i})+g(\hat{x}_{t,i})u) - \varepsilon\|\frac{\partial d_i^j}{\partial x}K_{t,i}c\|_2\nonumber
            \\
            &+ \frac{1}{2}tr\left(\nu_{t,i}^\top K_{t,i}^\top\frac{\partial^2 d_i^j}{\partial x^2}K_{t,i}\nu_{t,i}\right)\nonumber\\
            &\quad\quad\geq - sgn(\hat{d}_i^j(b_{t,i}))|\hat{d}(b_{t,i})|\Big\}
		\end{align}
		\While{$\cap_{i\in X_t(\varrho)}(\Omega_i\cap \Gamma_i)= \emptyset$ and $X_t(\varrho)\neq \emptyset$}
		\For {$i,k\in \{1,\ldots,m\}$}
        \If{$\|\hat{x}_{t,i}-\hat{x}_{t,k}\|_2>\theta_{ik}$ and $\|\hat{x}_{t,i}-\hat{x}_{t,i,k}\|_2>\theta_{ik}/2$}
        \State Update $X_t(\varrho)\leftarrow X_t(\varrho)\setminus\{i\}$
        \EndIf
        \EndFor
		\EndWhile
		\While{$\cap_{i\in X_t(\varrho)}(\Omega_i\cap \Gamma_i)= \emptyset$ and $X_t(\varrho)\neq \emptyset$}
		\State $k^*\leftarrow\argmax\{y_{t,k}-c_k\hat{x}_{t,k}:k\in X_t(\varrho)\}$
		\State Update $X_t(\varrho)\leftarrow X_t(\varrho)\setminus\{k^*\}$
		\EndWhile
		\If{$\cap_{i\in X_t(\varrho)}(\Omega_i\cap \Gamma_i)\neq \emptyset$}
		\State Solve $u_t$ as $u_t=\underset{u_t\in\cap_{i\in X_t(\varrho)}(\Omega_i\cap \Gamma_i)}{\arg\min} {u_t}^\top Mu_t$, where $M$ is positive definite
		\EndIf
		\State \textbf{Return $u_t$}
	\end{algorithmic}
\end{algorithm}

\begin{proof}
We show that if $\|b_{t,i}-b_t\|_2\leq \varepsilon_i$ and $\|b_{t,i}-b_{t,i,k}\|_2\leq\theta_{ik}/2$ for all $t$, then $u_t\in\Omega_i\cap\Gamma_i$ holds when $\hat{h}_i^j(b_{t,i})<\varrho_1$ and $\hat{d}_i^j(b_{t,i})<\varrho_2$.

Suppose that the fault pattern $r=r_i$. At time $t$, suppose that $\hat{h}_i^j(b_{t,i})<\varrho_1$ and $\hat{d}_i^j(b_{t,i})<\varrho_2$ hold. In addition, suppose that $\|b_{t,i}-b_{t,i,k}\|_2\leq\theta_{ik}/2$. We divide our discussion into three cases.

\underline{\emph{Case I.}} We first consider $\|b_{t,i}-b_{t,k}\|_2\leq\theta_{ik}$ for all $i,k\in X_t(\varrho)$. Note that $\Omega_i$ and $\Gamma_i$ can be represented as
\begin{equation*}
    \Omega_i=\{u:\Lambda_i(b_{t,i},u)\geq \omega_i\},\quad \Gamma_i=\{u:\Xi_i(b_{t,i},u)\geq \iota_i\},
\end{equation*}
where $\omega_i,\iota_i\in\mathbb{R}$ are some constants. When $\|b_{t,i}-b_{t,k}\|_2\leq\theta_{ik}$ for all $i,k\in X_t(\varrho)$ holds, by condition 1), we then have that there exists $u_t$ satisfying
\begin{equation*}
    \Lambda_i(b_{t,i},u_t)> 0,\quad \Xi_i(b_{t,i},u_t)> 0,~\forall i\in X_t'.
\end{equation*}
indicating that $u_t\in \cap_{i\in X_t(\varrho)}(\Omega_i\cap \Gamma_i)$.

\underline{\emph{Case II.}} We next consider that $\|b_{t,i}-b_{t,k}\|_2\leq\theta_{ik}$ for all $k\in X_t(\varrho)$, but there exists some $z,k\in X_t(\varrho)\setminus\{i\}$ such that $\|b_{t,k}-b_{t,z}\|>\theta_{zk}$. In this case, by the iteration between line 5-11 in Algorithm \ref{algo:synthesis}, $\Omega_k\cap\Gamma_k$ is removed. By the hypothesis that $\|b_{t,i}-b_{t,i,k}\|_2\leq\theta_{ik}$ for all $k\in X_t(\varrho)$, we have that the updated $X_t(\varrho)$ now reduces to Case I, and thus our previous analysis can be applied.

We finally suppose that $\|b_{t,i}-b_{t,k}\|_2>\theta_{ik}$ for some $k\in X_t(\varrho)$. We then have that
\begin{subequations}
\begin{align*}
    \theta_{ik}&\leq \|b_{t,i}-b_{t,i,k}+b_{t,i,k}-b_{t,k}\|_2\\
    &\leq \|b_{t,i}-b_{t,i,k}\|_2+\|b_{t,i,k}-b_{t,k}\|_2\\
    &\leq \frac{\theta_{ik}}{2}+\|b_{t,i,k}-b_{t,k}\|_2
\end{align*}
\end{subequations}
where the second inequality holds by triangle inequality, and the last inequality holds by the hypothesis that $\|b_{t,i}-b_{t,i,k}\|_2\leq\theta_{ik}/2$. In this case, $\|b_{t,i,k}-b_{t,k}\|_2\geq \theta_{ik}/2$, indicating that $\{k\}$ is removed by line 8 of Algorithm \ref{algo:synthesis}. Then our previous analysis becomes applicable once all such $k$ are removed. Lastly, using Theorem \ref{thm:SFCBF} and \cite{clark2020control}, we have that the control input $u$ satisfies sub-formula $\psi_j$ almost surely.
\end{proof}

We conclude this section by quantifying the probability of satisfying GDTL formula $\varphi$ as follows.
\begin{theorem}\label{thm:uniform bound}
If the conditions in Theorem \ref{thm:satisfaction of sub-formula} are satisfied,
then the probability of satisfying GDTL specification $\varphi$ satisfies
\begin{multline}\label{eq:uniform bound}
    \mathbb{P}( \varphi|\|b_{t,i}-b_{t,i,k}\|_2\leq\theta_{ik}/2,~\forall j,\\
    \|b_{t,i}-b_t\|_2\leq \varepsilon_i,~\forall t,\forall r_i\in\mathcal{F})=1.
\end{multline}
\end{theorem}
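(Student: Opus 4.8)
The plan is to reduce the satisfaction of the full specification $\varphi$ to the conjunction of the sub-formula satisfaction events already controlled by Theorem \ref{thm:satisfaction of sub-formula}, and then to combine these events using a countable-intersection argument. First I would recall the semantic bridge established earlier: by Proposition \ref{def:LTL equiv} and Definition \ref{def:DRA}, the specification $\varphi$ is satisfied whenever $Trace(\mathbf{b})$ induces the chosen accepting run $\eta=q_1,\ldots,q_n,(\eta_{suff})^\omega$ on the DRA $\mathcal{A}$. Since $\eta$ is in prefix-suffix form, inducing $\eta$ amounts to realizing the finite sequence of prefix transitions followed by infinitely many repetitions of the suffix cycle. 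By the decomposition \eqref{eq:sub-formula}, each sub-formula $\psi_j$ encodes exactly one such one-step transition $q_j\to q_{j+1}$ (reaching $\phi_j\land\Phi_{j+1}$ while maintaining the self-loop constraint $\Phi_j$). Thus, enumerating the prefix sub-formulae and then the suffix sub-formulae repeated over each cycle yields a countable family $\{\psi_j\}_{j\in\mathbb{N}}$ whose joint satisfaction is equivalent to inducing $\eta$, and hence sufficient for $\varphi$.

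Second, I would argue that the infinite run is genuinely realizable in the sense required by Definition \ref{def:trace}. This is where the \emph{finite-time} nature of the convergence CBFs is essential: Lemma \ref{thm:FT-SFCBF} and Theorem \ref{thm:satisfaction of sub-formula} guarantee that each `reach' target is attained within a finite settling time almost surely, so every transition along $\eta$ completes in finite time. Consequently the sequence of switching instants $t_0<t_1<\cdots$ witnessing the trace satisfies $t_i\to\infty$, and the suffix cycle is traversed infinitely often, which is precisely what the Rabin acceptance condition of $\mathcal{A}$ demands.

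Third, I would handle the probabilistic combination. Let $\mathcal{E}$ denote the conditioning event $\{\|b_{t,i}-b_{t,i,k}\|_2\le\theta_{ik}/2,\ \|b_{t,i}-b_t\|_2\le\varepsilon_i,\ \forall t,\forall r_i\in\mathcal{F}\}$, and let $E_j$ be the event that $\psi_j$ is satisfied. Theorem \ref{thm:satisfaction of sub-formula} gives $\mathbb{P}(E_j\mid\mathcal{E})=1$, equivalently $\mathbb{P}(E_j^c\mid\mathcal{E})=0$, for every $j\in\mathbb{N}$. Since $\varphi$ is satisfied on the event $\bigcap_j E_j$, countable subadditivity (Boole's inequality) yields $\mathbb{P}\big(\bigcup_j E_j^c\mid\mathcal{E}\big)\le\sum_j\mathbb{P}(E_j^c\mid\mathcal{E})=0$, whence $\mathbb{P}\big(\bigcap_j E_j\mid\mathcal{E}\big)=1$ and therefore $\mathbb{P}(\varphi\mid\mathcal{E})=1$, which is exactly \eqref{eq:uniform bound}.

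The hard part will be the second step rather than the third: the countable-intersection bound is routine, but one must justify that satisfying each transition in finite time actually produces a valid infinite trace inducing $\eta$, in particular that the settling times do not accumulate to a finite limit that would stall the run before the suffix has been cycled infinitely often. The finite-time-stability-in-probability guarantee of the fault-tolerant CBFs is exactly what rules this out, so the argument hinges on invoking it uniformly across all (countably many) sub-tasks under the single conditioning event $\mathcal{E}$.
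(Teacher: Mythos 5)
Your proposal is correct and rests on the same skeleton as the paper's proof: reduce $\varphi$ to the ordered satisfaction of the sub-formulae $\psi_j$ from the accepting run, invoke Theorem \ref{thm:satisfaction of sub-formula} to give each sub-formula conditional probability one under the event $\mathcal{E}$, and combine. Where you diverge is in the combination step and in what you make explicit. The paper applies the chain rule, writing $\mathbb{P}(\psi_1\ldots\psi_N\mid\mathcal{E})$ as a finite product of conditional probabilities $\mathbb{P}(\psi_j\mid\psi_1\ldots\psi_{j-1},\mathcal{E})$, each asserted to equal one by Theorem \ref{thm:satisfaction of sub-formula}; you instead use countable subadditivity on the complements $E_j^c$. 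Your route is arguably cleaner: it sidesteps the small gap in the paper's argument that Theorem \ref{thm:satisfaction of sub-formula} gives $\mathbb{P}(\psi_j\mid\mathcal{E})=1$ rather than the conditional-on-history version the chain rule needs (the two are equivalent here, but the paper does not say so), and it extends without modification to the countably many copies of the suffix sub-formulae needed for the Rabin acceptance condition, which the paper's finite product over $j=1,\ldots,N$ quietly elides. You also flag the one genuinely delicate point that the paper does not address at all: that almost-sure finite settling times for each transition do not by themselves prevent the switching instants $t_i$ from accumulating to a finite limit, which would stall the run before the suffix cycles infinitely often. Your claim that the finite-time-stability guarantee ``is exactly what rules this out'' is still a little quick --- finiteness of each increment does not imply $t_i\to\infty$ without a uniform lower bound or some additional argument --- but since the paper offers nothing on this point either, your treatment is no weaker and your identification of the issue is a genuine improvement in rigor.
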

\begin{proof}
We denote the probability that $\psi_j,\ldots,\psi_{j'}$ occur in order as $\mathbb{P}(\psi_j,\ldots,\psi_{j'})$, where $j\leq j'$. We also denote the event that $\|b_{t,i}-b_{t,i,k}\|_2\leq\theta_{ik}/2$ holds for all $j$ and $ \|b_{t,i}-b_t\|_2\leq \varepsilon_i$ holds for all $t$ under fault $r_i$ as $\Upsilon_i$.
By Bayes' theorem, we have that
\begin{align*}
    &\mathbb{P}( \varphi|\Upsilon_i,\forall r_i\in\mathcal{F})\geq \mathbb{P}( \psi_1\ldots\psi_N|\Upsilon_i,\forall r_i\in\mathcal{F})\\
    =&\mathbb{P}( \psi_N|\psi_1\ldots\psi_{N-1},\Upsilon_i,~\forall r_i\in\mathcal{F})\nonumber\\
    &\quad\cdot\mathbb{P}( \psi_{N-1}|\psi_1\ldots\psi_{N-2},\Upsilon_i,~\forall r_i\in\mathcal{F})\ldots\nonumber\\
    &\quad\cdot\mathbb{P}( \psi_1|\Upsilon_i,~\forall r_i\in\mathcal{F})\mathbb{P}(\Upsilon_i,~\forall r_i\in\mathcal{F}).
\end{align*}
By our hypothesis, we have that $\mathbb{P}(\Upsilon_i,~\forall r_i\in\mathcal{F})=1$. Using Theorem \ref{thm:satisfaction of sub-formula}, we have that $\mathbb{P}( \psi_j|\psi_1\ldots\psi_{j-1},\Upsilon_i,~\forall r_i\in\mathcal{F})=1$
holds for all $j=1,\ldots,N$. Therefore, we have that Eqn. \eqref{eq:uniform bound} holds, completing the proof.
\end{proof}

Based on Theorem \ref{thm:uniform bound}, we have that if the EKFs guarantee that $\mathbb{P}(\Upsilon_i)\geq 1-\epsilon$ for all $r_i\in\mathcal{F}$, then the probability of satisfying GDTL specification $\varphi$ can be bounded as $\mathbb{P}(\varphi)\geq 1-\epsilon$, as desired by Problem \ref{prob:formulation}. The accuracy of EKFs can be obtained via Lemma \ref{lemma:estimation error}.

\section{Case Study}\label{sec:simulation}

In this section, we present a case study on the coordination of two wheeled mobile robots (WMRs) in a 2-D domain. Each WMR follows dynamics:
\begin{equation}
\label{original_model}
    \begin{bmatrix}
    [\dot{x}_{t}]_1\\
    [\dot{x}_t]_2\\
    \dot{z}_t
    \end{bmatrix}=\begin{bmatrix}
    \cos z_t &0\\
    \sin z_t &0\\
    0&1
    \end{bmatrix}\begin{bmatrix}
    v_t\\\omega_t
    \end{bmatrix}+w_t
\end{equation}
where $[x_t]_1$ and $[x_t]_2$ model the position of the robot at time $t$, $z_t$ is the orientation of the robot at time $t$, $u_t=[v_t,\omega_t]^\top$ is the control input consisting of the linear and angular velocities, and $w$ is a zero-mean Gaussian process noise. Following \cite{chen2018building}, we apply feedback linearization to Eqn. \eqref{original_model} and use $7$ sensors $y_t\in\mathbb{R}^7$ to measure the location and velocities for each WMR. In particular, we let $[y_t]_1$ and $[y_t]_2$ measure horizontal position $[x_t]_1$, and use $[y_t]_3$ and $[y_t]_4$ to measure vertical position $[x_t]_2$. Outputs $[y_t]_5$ and $[y_t]_6$ measure the velocities of the WMR. The orientation of the WMR is measured by $[y_t]_7$.

The WMRs need to coordinate to satisfy a GDTL specification $\varphi$ given as $\varphi= \land_{i=1}^3 \tilde{\varphi}_i$, where $\tilde{\varphi}_1 = \Diamond(dest1a \land \Diamond dest1b)\land\Diamond dest2$, $ \tilde{\varphi}_2 = \Box(\neg Obs)$, and $\tilde{\varphi}_3=\Box(tr(P)\leq 0.9)$.
Specification $\varphi$ requires the WMRs to reach a set of destinations, denoted as $dest1a, dest1b$, and  $dest2$, in order, and to avoid the unsafe region $obs$. In addition, the WMRs must maintain an uncertainty that is less than $0.9$. There exists a malicious adversary that manipulates the outputs of $[y_t]_2$ and $[y_t]_4$ to bias the WMRs' location estimates.

\begin{figure}
    \centering
    \includegraphics[scale = 0.42]{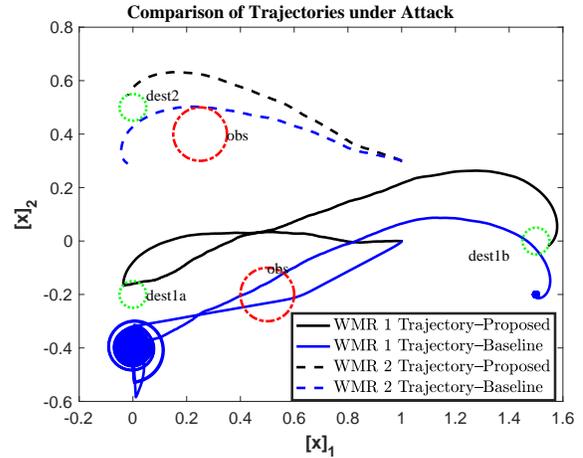}
    \caption{The trajectories of two WMRs generated using our proposed approach and a baseline are plotted in black and blue colors, respectively. The baseline applies a CBF-based control policy with state estimate being calculated using all sensors. Our proposed approach satisfies the given GDTL specification $\varphi$, whereas the baseline violates it.}
    \vspace*{-2mm}
    \label{fig:traj}
\end{figure}

We compare our proposed approach with a baseline approach \cite{niu2020control}. The baseline approach adopts the state-of-the-art abstraction-free CBF-based control synthesis using the state estimate obtained by all sensor measurements.

In the following, we demonstrate our proposed approach. We generate the DRA representing GDTL specification $\varphi$. We then pick an accepting run $\eta$ and decompose $\varphi$ into a sequence of sub-formulae using $\eta$. In this case study, accepting run $\eta$ requires the WMRs to reach destinations $dest1a$, $dest1b$, and $dest2$ in this order, while guaranteeing $\tilde{\varphi}_2$ and $\tilde{\varphi}_3$ to be always satisfied.

We present the trajectories for both robots using our proposed approach and the baseline in Fig. \ref{fig:traj} using lines in black and blue colors, respectively. The trajectories of WMR $1$ and $2$ are presented using solid and dashed lines, respectively. We observe that the trajectories of the robots using our proposed approach never enter the unsafe region $obs$ and reach their assigned destinations. In addition, our proposed approach yields $tr(P)=0.03\leq 0.9$ and thus satisfies specification $\varphi_3$. However, the trajectories generated by the baseline approach enter the the unsafe region $obs$ and thus violate specification $\varphi_2$. Moreover, WMR $1$ is biased by the sensor attack and fails to reach its destination $dest1a$ using the baseline approach. To summarize, our proposed approach guarantees that WMR $1$ and $2$ can coordinate to satisfy GDTL formula $\varphi$ in the presence of sensor attacks.
\section{Conclusion}\label{sec:conclusion}

In this paper, we studied the problem of abstraction-free synthesis for a control-affine nonlinear system to satisfy a Gaussian distribution temporal logic specification in the presence of process and observation noises as well as sensor faults and attacks. We proposed a solution approach that decomposes the given specification into a sequence of reach-avoid tasks that needs to be satisfied by the system. We developed the sufficient conditions for the controller so that each decomposed sub-formula can be satisfied by the system in the presence of sensor faults and attacks using our proposed fault-tolerant finite time convergence control barrier functions, along with the fault-tolerant zeroing control barrier functions. We proved that our synthesized controller guarantees almost sure satisfaction for each decomposed sub-formula, and thus ensures that the system satisfies the Gaussian distribution temporal logic specification with probability one, provided that the extended Kalman filters used by the system ensures certain accuracy bounds. We presented a numerical case study on the coordination of two wheeled mobile robots to demonstrate our proposed approach.

\bibliographystyle{IEEEtran}
\bibliography{IEEEabrv,MyBib}

\end{document}